%
\documentclass[runningheads,orivec]{llncs}
\usepackage[T1]{fontenc}
%
\usepackage{graphicx}
%
\pdfoutput=1
\usepackage{subcaption}

\usepackage{url}
\usepackage{amssymb}
\usepackage{amsmath}
\usepackage[normalem]{ulem}

\usepackage{xspace}

\usepackage{stmaryrd}  
\usepackage{mathrsfs}  
\usepackage{algorithm2e}

\usepackage{enumitem}
\usepackage[dvipsnames]{xcolor}


\renewcommand{\top}{\mathtt{{1}}}

\renewcommand{\emptyset}{\varnothing}
\newcommand{\nodeType}[1]{\ensuremath{\mathtt{#1}}\xspace}
\newcommand{\tBAS}{\nodeType{BAS}}
\newcommand{\tOR}{\nodeType{OR}}
\newcommand{\tAND}{\nodeType{AND}}


\allowdisplaybreaks

\begin{document}
\title{Fuzzy quantitative attack tree analysis}
%
%
\author{Thi Kim Nhung Dang\inst{1}\orcidID{0000-0002-3235-5952} \and
Milan Lopuhaä-Zwakenberg\inst{1}\orcidID{0000-0001-5687-854X} \and
Mariëlle Stoelinga\inst{1,2}\orcidID{0000-0001-6793-8165}}

\authorrunning{Dang et al.}
%
\institute{University of Twente, Enschede, the Netherlands  \\
\email{\{t.k.n.dang, m.a.lopuhaa, m.i.a.stoelinga\}@utwente.nl}\\
\and Radboud University, Nijmegen, the Netherlands \\
\email{m.stoelinga@cs.ru.nl}\\
}
\maketitle              
\begin{abstract}

Attack trees are important for security, as they help to identify weaknesses and vulnerabilities in a system.  Quantitative attack tree analysis supports a number security metrics, which formulate important KPIs such as the shortest, most likely and cheapest attacks.

A key bottleneck in quantitative analysis is that the values are usually not known exactly, due to insufficient data and/or lack of knowledge. Fuzzy logic is a prominent framework to handle such uncertain values, with applications in numerous domains. While several studies proposed fuzzy approaches to attack tree analysis, none of them provided a firm definition of fuzzy metric values or generic algorithms for computation of fuzzy metrics. 

In this work, we define a generic formulation for fuzzy metric values that applies to most quantitative metrics. The resulting metric value is a fuzzy number obtained by following Zadeh's extension principle, obtained when we equip the basis attack steps, i.e., the leaves of the attack trees, with fuzzy numbers. In addition,  we prove a modular decomposition theorem that yields a bottom-up algorithm to efficiently calculate the top fuzzy metric value. 

\keywords{Attack trees  \and quantitative analysis \and fuzzy numbers.}
\end{abstract}
\section{Introduction}\label{sec:intro}
\paragraph{Attack trees.} 

Attack trees (ATs) \cite{schneier1999modeling} are a popular tool for modeling and analyzing security risks. They provide a structural way to identify vulnerabilities in a system, by decomposing the attacker's goal into subgoals, down to basic attack steps that a malicious actor can take to reach said objective. An attack tree consists of basic attack steps (BASs) representing atomic adversary actions, and intermediate AND/OR-gates whose activation depends on the activation of their children. The attacker's goal is to activate the root (top node), see Fig.~\ref{fig:ex_SAT} for an example. ATs can be trees or directed acyclic graphs (DAGs). ATs have been supported by commercial tools~\cite{isograph,risktree,SecurITree} and equipped with semantics~\cite{mauw2006foundations,kumar2015quantitative}.

\begin{figure}[!t]
     \centering
     \includegraphics[width=0.45\textwidth]{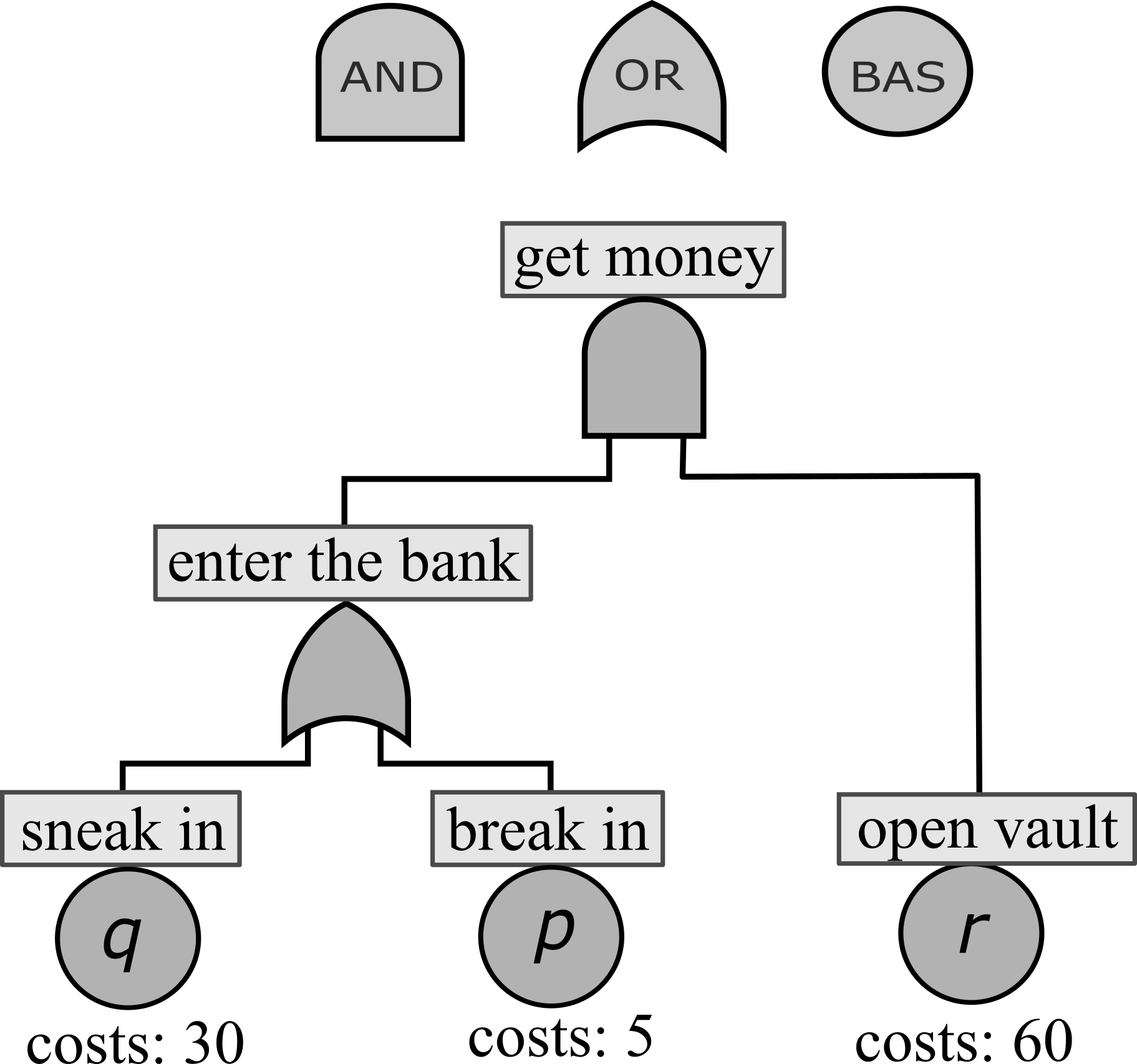}
     \caption{The AT model visualises the attack steps by which an attacker can illegally take money from a bank.  The attacker needs to enter the bank by breaking in or sneaking in, and also needs to open a vault. Sneaking in, breaking in, and opening a vault cost 30, 5 and 60 minutes, respectively. Hence, the quantitative metric minimal cost for the attacks is $\min(30+60,5+60)=65$.}
     \label{fig:ex_SAT}
\end{figure}

\paragraph{Quantitative analysis.} Beyond qualitative analysis, ATs are also used to calculate important security metrics of the system, e.g., the minimal cost (in money, time or resources) the attacker needs to spend for a succesful attack, or the probability of a succesful attack. Such metrics are obtained by assigning an attribute value to each BAS, such as the cost needed to perform that BAS, and using this as input to calculate the security metric. When the AT is treeshaped, the metric is quickly calculated using a bottom-up algorithm, propagating values from the BASs to the top. For DAG-shaped ATs this problem is NP-complete, but good heuristics exist \cite{lopuhaa2022efficient}. These algorithms are formulated in the generic algebraic structure of semirings, allowing them to be employed to a vast range of security metrics including cost, time, skill, damage, etc.

\paragraph{Uncertain parameters.} The methods described above assume that all BAS parameters are known exactly. However, this is problematic in practice: statistics on attacker capabilities may be hard to obtain, and because of the fast-changing nature of the field historical data are only of limited use. Obtaining accurate and realistic parameter values is a key bottleneck in quantitative security analysis. In its absence, there is a great need for methods that allow us to deal with uncertain and approximately known parameter values.

\paragraph{Fuzzy theory.} Fuzzy theory is a prominent framework in which parameter uncertainty and its effect on a calculation's outcome can be expressed mathematically. It has been successfully used in many applications, including machine learning \cite{couso2019fuzzy}, reliability engineering \cite{bowles1995application}, and computational linguistics \cite{massanet2014new}. Rather than exact (`crisp') values, e.g., $x = 3$, each parameter is assigned a range of values, and to each of these a possibility value in $[0,1]$ is assigned by means of a \emph{membership function}. Often, only functions of a specific form are considered, leading to the definition of triangular, trapezoidal, etc. fuzzy numbers \cite{Jezewski2017theory}.

While fuzzy theory has been applied to AT analysis before \cite{Komal2023FATA,wang2021cybersecurity,Li2020vehicle,Garg2014anAT,wen2021risk}, much of the earlier work lacks mathematical rigor, and none of these apply fuzzy theory to quantitative analysis. As a result, there are no algorithms for calculating AT metrics with fuzzy parameters. In fact, to our knowledge the fuzzy counterpart of quantitative AT analysis has not been defined yet. A key technical hurdle is that the operations typically used in AT analysis do not preserve popular fuzzy number types: for instance, the OR-gate corresponds to the operation min for the minimal cost metric, and applying min to two triangular fuzzy numbers does not yield a triangular fuzzy number.

\paragraph{Contributions.} Our first contribution is a clear, mathematically rigorous definition of fuzzy AT metrics. Because these are defined for general fuzzy numbers, rather than specific subtypes such as triangular fuzzy numbers, we sidestep the problem that these subtypes are not preserved under AT metric operations; instead, our definition works for the generic semiring framework defined in \cite{lopuhaa2022efficient}. We show that our definition naturally follows from Zadeh's extension principle \cite{zadeh1965fuzzy}, a general approach for extending functions to fuzzy numbers.

Having defined
fuzzy AT metrics, we furthermore develop a linear-time, bottom-up algorithm for calculating them for tree-shaped ATs. We show the validity of this algorithm by showing that fuzzy AT metrics are susceptible to \emph{modular analysis}: when an AT has a module, i.e., a minimally connected subcomponent, a fuzzy metric can be computed by first calculating the metric for the module and then for its complement. When an AT has many modules, this substantially speeds up computation. When an AT is tree-shaped, every node is a module, proving the validity of the algorithm.

Our algorithm generalizes the bottom-up algorithm for crisp AT metrics from \cite{lopuhaa2022efficient}. Unfortunately, the algorithm for DAG-shaped metrics from that paper does not transfer to the fuzzy setting. The key reason is that fuzzy numbers do not form a semiring, as we show in this paper. Fuzzy metrics for DAG-shaped ATs require a radically new approach, and we leave this for future work.



Summarized our contributions are:
\begin{enumerate}[label=\textbf{\arabic*.},leftmargin=1.6em]
\item	A rigorous, general definition of fuzzy AT metrics;
\item	A bottom-up algorithm for computing fuzzy metrics in tree-structured ATs;
\item   A proof of modular decomposition for fuzzy AT metrics.
\end{enumerate}



\section{Related work}\label{sec:related_work}

Below, we provide a literature review for computation of metrics with fuzzy numbers applied to attack trees and the related formalism of fault trees.

\paragraph{Attack tree analysis with fuzzy numbers.}
An intuitionistic fuzzy set was used to represent the uncertainty and hesitancy present in data~\cite{Komal2023FATA}, or an attack-defense model was proposed~\cite{wang2021cybersecurity,Garg2014anAT}, or using a fuzzy analytic hierarchy process to establish a successful probability model of cyber attack~\cite{wen2021risk,Li2020vehicle}. However, there have been several studies on the approach of involving fuzzy attribution in fault tree analysis (FTA) summarized~\cite{yazdi2023FTAimprovements,kabir2018review,ruijters2015FTA,kabir2017anoverview,mahmood2013FFTA} for many years.

\paragraph{Fault tree analysis.}
Fault trees can be considered as the safety variant of attack trees: whereas attack trees indicate how malicious attacks propagate through a system and lead to damage, fault trees indicate how unintended failures propagate and lead to system level failures. Therefore, leaves of a fault tree model component failures and are called basic events (BEs). 
Due to their similarities, many approaches to fuzzy fault tree analysis can also be applied to attack trees. Comprehensive literature surveys on fault trees with fuzzy numbers can be found in~\cite{yazdi2023FTAimprovements,mahmood2013FFTA,ruijters2015FTA,kabir2017anoverview}.

\paragraph{Fault tree analysis with fuzzy probabilities.}
Fuzzy set theory was firstly used in fault tree analysis by Tanaka et al.~\cite{tanaka1983fault} to address the problem of uncertain BEs failure. In the paper, Zadeh's extension principle was used to estimate the possibility of system failure. The failure possibility of the basic events and top event were represented as trapezoidal fuzzy numbers. 

Singer~\cite{singer1990fuzzysetFT} considered the distribution of BEs as fuzzy numbers. The membership function is continuous and is approximated by left and right functions called L-R type fuzzy numbers~\cite{Dubois1979FuzzyRA}. Here, L-R type fuzzy numbers are defined by a triplet $(m,a,b)$, where $m,a,b$ are positive real numbers. The author extended algebraic operations on the triplet of L-R type fuzzy numbers and calculated the possibility distribution of the system.

Kim et al.~\cite{kim1996multilevel} evaluated the possibility of system failure. Similar to~\cite{singer1990fuzzysetFT}, L-R type fuzzy numbers are used as the possibilities of BEs. The value $m$ of the triplet $(m,a,b)$ is evaluated by four-expert valuations in the form of triangular fuzzy numbers (TFNs). Each value $m$ is determined to calculate the optimistic and pessimistic possibilities of a system accident. Finally, two cases of possibilities - the pessimistic possibility of system failure with major TFN and the optimistic one with minor TFN - were determined.

Lin et al.~\cite{lin1997hybridFTA} estimated failure possibility of ambiguous events. For this purpose, the linguistic variables describing the evaluation data are expressed in triangular or trapezoidal fuzzy numbers denoting failure possibilities. The fuzzy possibility of a top event is calculated using the $\alpha$-cut fuzzy operators.

Peng et al.~\cite{peng2008approach} presented an approach to fault diagnosis of communication control systems. All probability values of the fault tree were converted to uniform triangle fuzzy numbers. The fuzzy probability of the top event was then calculated using Zadeh's principle. A fault tree (FT) consisting of only OR-gates was shown as an analytical example to determine the confidence interval of probability of top event and achieve fuzzy reasoning diagnosis result.






\paragraph{Fault tree reliability analysis with interval arithmetic.}

Purba et al.\cite{Purba2015fuzzyprob} developed a fuzzy probability based fault tree analysis to propagate and quantify epistemic uncertainty raised in basic events. BE reliability characteristics are described in fuzzy probabilities. From the BE fuzzy probabilities, the matrix of fuzzy probabilities of the minimal cut sets is generated and then the top event fuzzy probability is quantified using the Fuzzy multiplication rule in engineering applications.

Purba et al.~\cite{Purba2022FuzzyPA} proposed a fuzzy probability and $\alpha$‐cut based‐FTA approach. Each fuzzy probability distribution of BEs is represented uniquely by an $\alpha$-cut. The top event $\alpha$-cut is quantified into the best estimate $\alpha$-cut, the lower bound $\alpha$-cut, and the upper bound $\alpha$-cut follow fuzzy arithmetic operations on $\alpha$-cuts of BEs. The approach was verified by evaluating the reliability of a complex engineering system and the results are compared to the reliability of the same system quantified by conventional FTA.

\paragraph{Fuzzy FTA by conversion of fuzzy number of BEs to crisp probability of BEs.}

Hu et al.~\cite{Hu2019FFTA} developed an FFTA methodology for analyzing above-ground walled storage system failures. Expert elicitation and fuzzy logic was used to manipulate the ambiguities and vagueness in the linguistic variables of BEs. Fuzzy probability BE was defuzzified to a crisp number. The resultant crisp probability of BEs were used as inputs to generate crisp probability of the top event.

At the time of this writing, fuzzy analysis has not been studied for ATs. The literature has introduced fuzzy analysis of FTs, but it only addresses certain types of fuzzy numbers (trapezoidal, triangular, etc.). This paper thus provides a general mathematical framework for fuzzy analysis of ATs.







\section{Fundamentals of fuzzy theory}\label{sec:fundamentals_fuzzy}

Fuzzy set theory was introduced by L.A. Zadeh~\cite{zadeh1965fuzzy} to deal with problems in which vagueness is present. Instead of considering elements $x$ of a set $X$ with a fixed value, we consider fuzzy elements $\mathsf{x}$ which can have a range of possible values; the extent to which $\mathsf{x}$ can be equal to $x$ is expressed by the \emph{membership degree} of $x$ in $\mathsf{x}$, which is a value $\mathsf{x}[x] \in [0,1]$. The value $\mathsf{x}[x]$ is the confidence one has that $\mathsf{x}$ has value $x$. Here $\mathsf{x}[x] = 1$ denotes full membership, while $\mathsf{x}[x] = 0$ denotes no membership.

For instance, the time needed to perform an attack may be given as a real number, e.g. $x = 3 \in \mathbb{R}$; but often the exact time needed is not known precisely, and can be somewhere around $3$. This can be represented by a fuzzy number $\mathsf{x}\colon \mathbb{R} \rightarrow 1$ which is $0$ everywhere except close to $3$, and which has a maximum at $3$ (see Fig.~\ref{fig:non-fuzzy_vs_fuzzy}). 

\begin{figure}[!t]
     \centering
     \begin{subfigure}[b]{0.3\textwidth}
         \centering
         \includegraphics[width=\textwidth]{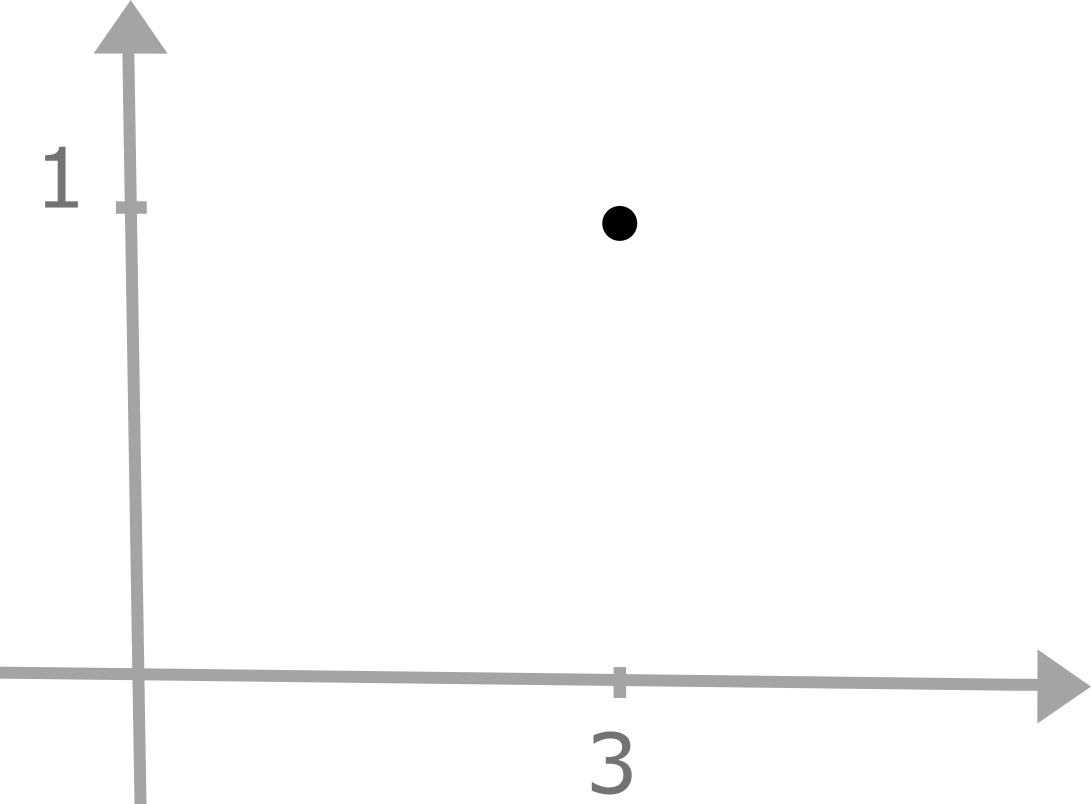}
         \caption{}
         \label{fig:classic_element}
     \end{subfigure}
     \hspace{5em}
     \begin{subfigure}[b]{0.3\textwidth}
        \hfill
        \centering         
        \includegraphics[width=\textwidth]{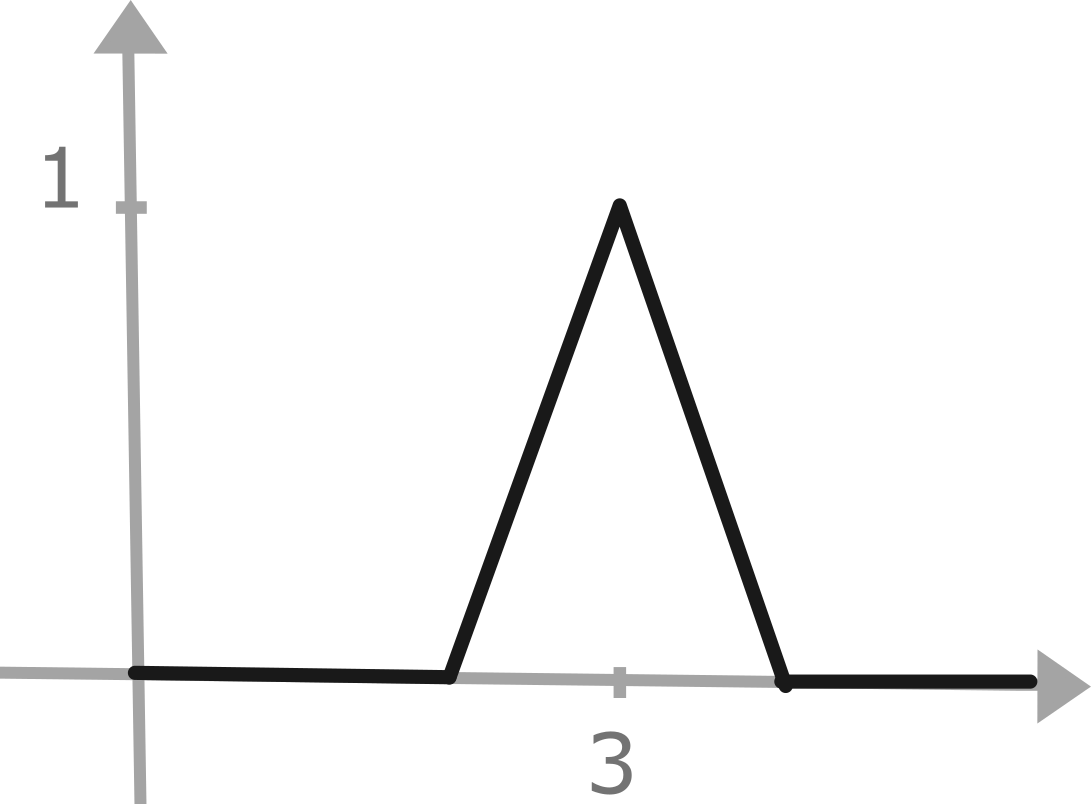}
        \caption{}
        \label{fig:fuzzy_element}
     \end{subfigure}
     \caption{A non-fuzzy, `crisp' element $x$ (a) and a fuzzy element $\mathsf{x}$ (b).}
     \label{fig:non-fuzzy_vs_fuzzy}
\end{figure}

\begin{definition}
Let $X$ be a set. A \emph{fuzzy element} of $X$ is a function $\mathsf{x}\colon X \rightarrow [0,1]$. The set of all fuzzy elements of $X$ is denoted $\mathbf{F}(X) := \{ \mathsf{x} \ | \ \mathsf{x} \colon X \rightarrow [0,1] \}$.
\end{definition}

In the literature, fuzzy elements are usually called \emph{fuzzy sets} \cite{zadeh1965fuzzy}, on the basis that the membership function $\mathsf{x}\colon X \rightarrow [0,1]$ generalizes the indicator function $1_S\colon X \rightarrow \{0,1\}$ of a set $S \subseteq X$; thus a fuzzy set can be thought of as a set of which elements can have partial membership. Instead, we use the term \emph{fuzzy element} to stress that in this paper, fuzzy elements are used to express the uncertainty of individual values, as in Fig.~\ref{fig:fuzzy_element}, rather than the uncertainty of set membership. A fuzzy element $\mathsf{x}$ behaves similarly to a probability density function in that the uncertainty of an element of $X$ is expressed by a function on $X$.

Our definition of fuzzy element is very general. Many works in the literature restrict the form of the function $\mathsf{x}\colon X \rightarrow [0,1]$ to make computation more convenient, especially for $X = \mathbb{R}$, i.e., for so-called \emph{fuzzy numbers}. Thus there exist triangular, trapezoidal, Gaussian, etc. fuzzy numbers \cite{Jezewski2017theory,Czogala2000fuzzy}.

\begin{example}
Consider real numbers $a \leq b \leq c \leq d$. The \emph{trapezoidal fuzzy number} $\mathsf{trap}_{a,b,c,d} \in \mathbf{F}(\mathbb{R})$  is defined as (see Fig. \ref{fig:ex:trapezoidal}):
\begin{align}
    \mathsf{trap}_{a,b,c,d}[x] &= 
    \begin{cases}
        \tfrac{x-a}{b-a}, & \textrm{if } a < x \leq b,\\
       1, & \textrm{if } b<x<c,\\
        \tfrac{d-x}{d-c}, & \textrm{if } c \leq x < d,\\
        0, & \text{otherwise}.
    \end{cases}
\end{align}

The trapezoidal fuzzy number $\mathsf{trap}_{a,b,c,d}$ has the maximal membership degree of $1$, i.e., $\mathsf{trap}_{a,b,c,d}[x]=1$ for all $x\in [b,c]$. At the same time, $a$ and $d$ are the lower and upper
bounds of its support, respectively. In case $b=c$, we have a \emph{triangular fuzzy number} $\mathsf{tri}_{a,b,d}$.
\end{example}

\begin{figure}[ht]
    
    \begin{center}
    \includegraphics[width=0.5\textwidth]{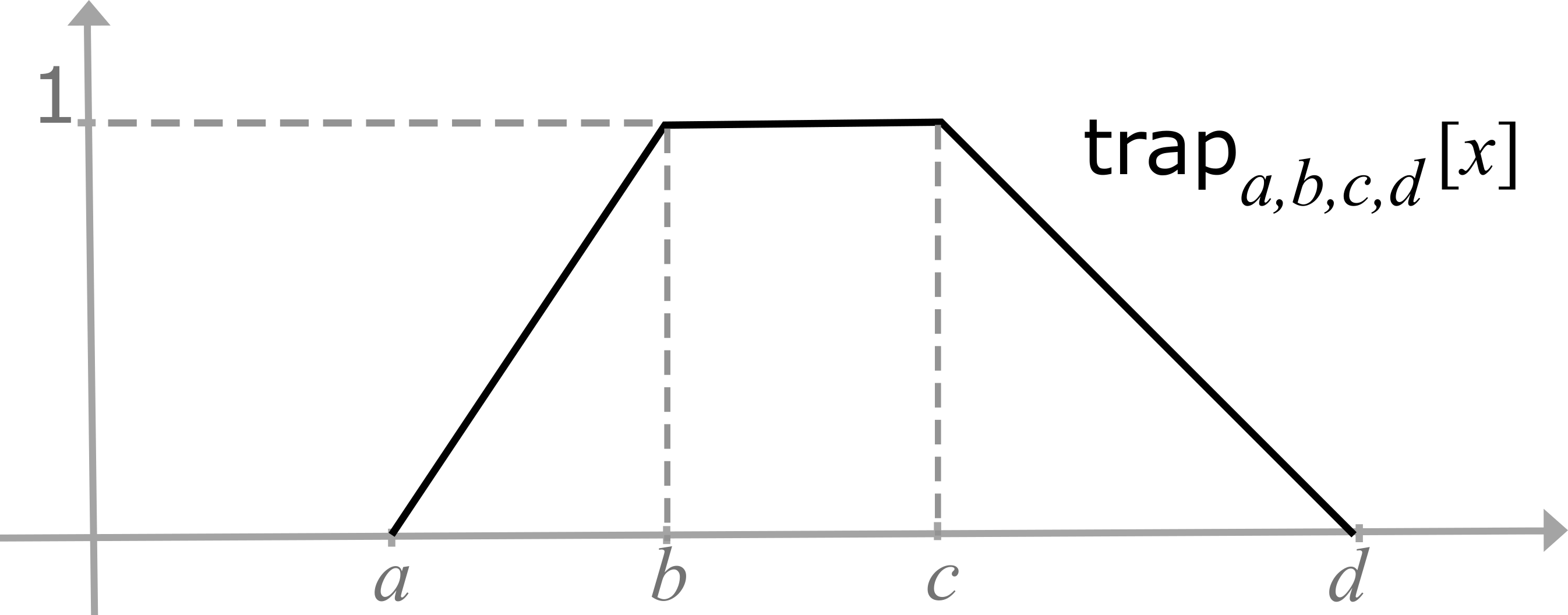}
    \end{center}
    \caption{The trapezoidal fuzzy number $\mathsf{trap}_{a,b,c,d}$.}
    \label{fig:ex:trapezoidal}
\end{figure}

For notational convenience we occasionally abbreviate $\mathsf{x}$ via a list of membership values $x \mapsto \mathsf{x}[x]$, omitting $x$ for which $\mathsf{x}[x] = 0$. For example, $\mathsf{x} = \{1 \mapsto 0.7,2 \mapsto 0.5\} \in \mathbf{F}(\mathbb{Z})$ is defined by
\begin{align*}
    \mathsf{x}[x] = 
    \begin{cases}
    0.7, & \textrm{ if $x=1$},\\
    0.5, & \textrm{ if $x=2$},\\
    0, & \textrm{ otherwise}.
    \end{cases}
\end{align*}

Arithmetic operations on fuzzy elements are performed following Zadeh's extension principle~\cite{Jezewski2017theory,deBarros2017afirst,zadeh1975conceptIII,zadeh1975conceptII,Zadeh1975conceptI,zadeh1965fuzzy}. This principle provides a framework to apply functions and arithmetic operations on sets to their fuzzy elements. Before giving the full definition, we motivate it by an example.

\begin{example}\label{ex:fuzzy_add}
    Consider $\mathsf{x},\mathsf{y} \in \mathbf{F}(\mathbb{N})$ given by
\begin{align*}
    \mathsf{x} &= \{\ 2 \mapsto 0.4,\ 3 \mapsto 1 \},\\
    \mathsf{y} &= \{\ 5 \mapsto 1,\ 6 \mapsto 0.6 \}.
\end{align*}

We wish to calculate the addition of $\mathsf{x}$ and $\mathsf{y}$, which we write as $\mathsf{x}\widetilde{+}\mathsf{y}$. This is also an element of $\mathbf{F}(\mathbb{N})$ and so we must specify the confidence $(\mathsf{x}\widetilde{+}\mathsf{y})[z]$ that the sum values to $z$, for all $z \in \mathbb{N}$. Consider $z = 8$; the sum values to $8$ only in one of these two cases:
\begin{itemize}
\item $\mathsf{x}$ values to 2 and $\mathsf{y}$ values to 6;
\item $\mathsf{x}$ values to 3 and $\mathsf{y}$ values to 5.
\end{itemize}
Our confidence that $\mathsf{x}$ values to 2 is $\mathsf{x}[2] = 0.4$, and our confidence that $\mathsf{y}$ values to 6 is $\mathsf{y}[6] = 0.6$. Our confidence that both of these are true, i.e., that the first case holds, is then $\min\{0.4,0.6\} = 0.4$. Similarly, our confidence that the second case holds is $\min\{1,1\} = 1$. Our confidence $(\mathsf{x}\widetilde{+}\mathsf{y})[8]$ that the sum values to 8 is then the confidence that either of the two cases above holds; this is expressed by the maximum, so 
\[
(\mathsf{x}\widetilde{+}\mathsf{y})[8] = \max\{0.4,1\} = 1.
\]
Similarly one can calculate $(\mathsf{x}\widetilde{+}\mathsf{y})[z]$ for other values of $z$, by taking all possible outcomes of the sum and calculating their confidence. This yields
\[
\mathsf{x} \widetilde{+} \mathsf{y} = \{7 \mapsto 0.4, 8 \mapsto 1, 9 \mapsto 0.6\}.
\]




\end{example}


The idea behind Example \ref{ex:fuzzy_add} can be applied to general multivariate functions. The only change that needs to be made is that in general, there may be infinitely many pairs $(x,y)$ such that $f(x,y) = z$; therefore one needs to take the supremum over all $\min\{\mathsf{x}[x],\mathsf{y}[y]\}$ rather than the maximum.

\begin{definition}[Zadeh's Extension Principle]\label{def:extension_principle}
Let $f$ be a multiargument function $f:X_1\times X_2\times\dots \times X_n \to Y$. The \emph{Zadeh extension} of $f$ is the function $\tilde{f}\colon \mathbf{F}(X_1) \times \ldots \times \mathbf{F}(X_n) \rightarrow \mathbf{F}(Y)$ defined as:
\[
\tilde{f}(\mathsf{x}_1,\ldots,\mathsf{x}_n)[y] = \begin{cases}
    \underset{\substack{(x_1,x_2,\dots,x_n) \in \prod_i X_i\colon\\
    f(x_1,x_2,\dots,x_n)=y}}{\sup}\min\limits_{i=1,\dots,n}\ \mathsf{x}_i[x_i], & f^{-1}(y) \neq \emptyset,\\
    0 & f^{-1}(y) = \emptyset.
\end{cases}
\]
\end{definition}

Based on the extension principle, different arithmetic operations on fuzzy numbers have been defined~\cite{Basiura2015advances,tanaka1983fault,deBarros2017afirst,liang1991fuzzyFT,peng2008approach}. As a result of Definition~\ref{def:extension_principle}, addition and subtraction operations on fuzzy numbers typically have straightforward formulations. E.g., for two trapezoidal fuzzy numbers we have
\begin{align*}
    \mathsf{trap}_{a_1,a_2,a_3,a_4} \  \widetilde{+} \ \mathsf{trap}_{b_1,b_2,b_3,b_4} &= \mathsf{trap}_{a_1 + b_1, a_2 + b_2, a_3 + b_3, a_4 + b_4},\\
    \mathsf{trap}_{a_1,a_2,a_3,a_4} \ \widetilde{-} \ \mathsf{trap}_{b_1,b_2,b_3,b_4} &= \mathsf{trap}_{a_1 - b_4, a_2 - b_3, a_3 - b_2, a_4 - b_1}.
\end{align*}
Multiplication and division, however, are nonlinear operations that produce fuzzy numbers of different types than the operands; for example, the quotient of two trapezoidal fuzzy numbers is itself not trapezoidal. For convenience and to simplify the computation, the resulting fuzzy number
can be approximated by a fuzzy number of the same type. The computation and visualisation of these estimations can be found in~\cite{Basiura2015advances}.

In section \ref{sec:fuzzy_metrics}, we will apply the general fuzzy element framework to formulate fuzzy attack tree metrics. Unfortunately, the operators considered in AT analysis, such as $\min$, do not preserve triangular, trapezoidal, etc. fuzzy numbers. We therefore need to work with fuzzy numbers and Zadeh extensions in full generality as defined above.


\section{Attack trees}\label{sec:AT}
In this section, we provide a brief overview of ATs as presented in \cite{lopuhaa2022efficient}. Attack trees are hierarchical graphical models that illustrate the attack process. The trees are usually drawn inverted, with the root node located at the top of the tree and branches descending from the root to the lowest levels of the tree -- the leaves.  The root node represents the attacker's overall objective. The leaves in ATs are called \emph{Basic Attack Steps} (BASs) representing the attacker's activities.  Nodes between the leaves and the root node depict transitional states or attacker sub-goals. These intermediate steps are equipped with \emph{logical gates} that indicate whether an intermediate step succeeds, e.g. the AND-gate succeeds if all input children succeed, the OR-gate is successful if at least one child does succeed.

\begin{definition}\label{def:attack_tree}\emph{\cite{lopuhaa2022efficient}}
    An \emph{attack tree} is a tuple $T=(N,E,t)$, where $(N,E)$ is a rooted directed acyclic graph, and $t$ is a map $t\colon N \rightarrow \{\tBAS,\tOR,\tAND\}$ such that $t(v) = \tBAS$ if and only if $v$ is a leaf for all $v \in N$.
\end{definition}

The root of $T$ is denoted $R_T$, and the set of children of a node $v$ is denoted $ch(v) = \{w \in N \mid (v,w) \in E\}$. The set of basic attack steps is denoted $\mathrm{BAS}_T = \{v \in N \mid t(v) = \tBAS\}$.

\subsection{Semantics for attack trees}
The semantics of an AT are defined by its successful attacks, i.e., attacks that activate the top node. Formally, an \emph{attack} is a subset $A \subseteq \mathrm{BAS}_T$. For example, in Fig.~\ref{fig:ex_SAT}, $\{p,r\}$ is an attack, corresponding to stealing money by breaking in and then opening the vault. An attack's success is most conveniently expressed by the \emph{structure function}, which is defined recursively as follows:

\begin{definition}
    	\label{def:SAT:sfun}\emph{\cite{lopuhaa2022efficient}}
	Let $T$ be an AT. The \textit{structure function} $f_T\colon N \times 2^{\mathrm{BAS}_T}\to \{0,1\}$
	of $T$ is defined, for a node $v \in N$ and an attack $A \subseteq \mathrm{BAS}_T$, by
\begin{align} 
f_T(v,A) =&
\begin{cases}
    1  & \parbox{55pt}{if~$t(v)=\tOR$} ~\text{and}~%
				\exists u \in ch(v)~\text{s.t} ~f_T(u,A)=1,\\
    1  & \parbox{55pt}{if~$t(v)=\tAND$}~\text{and}~%
				\forall u\in ch(v)~\text{s.t} ~f_T(u,A)=1,\\
    1  & \parbox{55pt}{if~$t(v)=\tBAS$}~\text{and}~%
				v\in A,\\
    0  & \text{otherwise}.
\end{cases}
\end{align}
\end{definition}

An attack $A$ is said to \textit{reach} a node $v$ if $f_T(v,A) = \top$, i.e. it makes $v$ succeed. If no proper subset of $A$ reaches $v$, then $A$ is a \textit{minimal attack on $v$}. The set of minimal attacks on $R_T$ is denoted $\llbracket T \rrbracket$. For example, the AT from Fig.~\ref{fig:ex_SAT}, has three successful attacks:
	$\{r,q\}$, $\{r,p\}$, and $\{r,q,p\}$. The first two are minimal,
	so we have: $\llbracket T \rrbracket=\{\{r,q\},\{r,p\}\}$.

Discussion regarding attacks and semantics for ATs are presented in~\cite{lopuhaa2022efficient}. Note that adding BASes to an attack will not make it less successful; hence the successful attacks are determined by $\llbracket T \rrbracket$. This leads to the following definition of the semantics.

\begin{definition}\label{def:SAT:semantics}
	The \emph{semantics of an AT} $T$ is its suite of minimal attacks $\llbracket T \rrbracket $.
\end{definition}

\subsection{Security metrics for attack trees}\label{sec:crisp_metrics}

Quantitative AT analysis may concern various attributes, such as cost, time, damage, etc. To handle all these attributes in a generic way, analysis algorithms work over a so-called \emph{attribute domain} $(V,\triangledown,\vartriangle)$. Here $V$ is the value domain for the attribute, e.g., $\mathbb{R}_{\geq 0}$ for costs, and $[0,1]$ for probability. Furthermore, $\triangledown$ and $\vartriangle$ are binary operators on $V$, where $\triangledown$ denotes the way values are propagated over an  $\tOR$-gate: If $T = \tOR(a,b)$ and $a,b$ are BASs assigned metric values $x_a,x_b$, then $x_a \triangledown x_b$ is the security value of $T$. Similarly $\vartriangle$ is the operator corresponding to the $\tAND$-gate. For technical reasons we assume $\triangledown$ and $\vartriangle$ satisfy some algebraic properties, which is encoded in the definition of a semiring.

\begin{definition}\emph{\cite{lopuhaa2022efficient}}
A \emph{semiring} is a tuple $(V,\triangledown,\vartriangle)$ where $V$ is a set, $\triangledown$ and $\vartriangle$ are commutative associative binary operators on $V$, and $\vartriangle$ distributed over $\triangledown$ (i.e. $x \vartriangle (y \triangledown z) = (x \vartriangle y) \triangledown (x \vartriangle z)$).
\end{definition}

To assign a metric value to an AT $T$, one chooses a semiring $V$ in which the metric takes value, as well as a BAS value $x_a \in V$ for each BAS $a$; this is encoded as a vector $\vec{x} \in V^{\mathrm{BAS}_T}$. The calculation of $T$ proceeds in two steps: first, we assign values to an attack $A = \{a_1,\ldots,a_n\}$. Since all BASs have to be executed, we set $m_A(\vec{x}) = \bigtriangleup_{i=1}^n x_{a_i}$. This corresponds to the cost/damage/probability/etc. of the attack $A$, given the BAS values $\vec{x}$. Next, we calculate the metric value of $T$ as a whole. To do this, we consider the set of all minimal attacks $\llbracket T \rrbracket  = \{A_1,\ldots,A_m\}$. Since for the top node to be reached one only needs one minimal attack, the metric value for $T$ is calculated via $m_T(\vec{x}) = \bigtriangledown_{i=1}^m m_{A_i}(\vec{x})$.

\begin{example}
	\label{ex:SAT:metric}
	\setlength{\abovedisplayskip}{1.5ex}  
    We consider the \emph{minimal cost} metric that assigns to an AT the minimal cost the attacker needs to spend to successfully reach the top node. This corresponds to the semiring $(\mathbb{N},\min,+)$. Indeed, the cost needed to activate the top node in $\tOR(a,b)$ is the minimum of the costs $x_a$ and $x_b$, as only one of the two children needs to be activated; hence $\triangledown = \min$. Similarly, an $\tAND$-gate needs to activate all children, so their costs need to be added and $\vartriangle = +$. Then given a vector $\vec{x} \in \mathbb{R}_{\geq 0}^{\mathrm{BAS}_T}$ assigning a cost value $x_a \in \mathbb{R}_{\geq 0}$ to each BAS $a$, the metric value of $T$ is defined as $m_T(\vec{x}) = \min_{A \in \llbracket T \rrbracket} \sum_{a \in A} x_a$.  Here $\sum_{a \in A} x_a$ is the total cost of performing an attack $A$, so the metric value corresponds to the cost of the cheapest minimal attack. Consider the AT $T=\mathrm{AND}\big(r,\mathrm{OR}(q,p)\big)$ in Fig.~\ref{fig:ex_SAT}. Recall that $\llbracket T \rrbracket =\{\{r,q\},\{r,p\}\}=\{A_1, A_2\}$, and consider an attribution $\vec{x}$ given by $x_r = 60, x_q = 30, x_p = 5$. Then the metric can be calculated as follows.

	\begin{align*}
	m_{T}(\vec{x}) &= \min\left(\sum_{a \in A_1} x_a,\sum_{a \in A_2} x_a\right) \\
 &= \min(60+30, 60+5) = 65.
	\end{align*}

\end{example}

Formalizing the discussion and example above leads to the following definition.

\begin{definition} \label{def:metric}\emph{\cite{lopuhaa2022efficient}}
Let $T$ be an AT and let $(V,\triangledown,\vartriangle)$ be a semiring.
\begin{enumerate}
    \item An \emph{attribution} of $T$ in $V$ is an element $\vec{x}$ of $V^{\mathrm{BAS}_T}$.
    \item Given an attribution $\vec{x}$, the \emph{metric value} of $T$ given $V$ and $\vec{x}$ is defined as
    \begin{equation}\label{eq:crisp_metric}
        m_T(\vec{x}) = \bigtriangledown_{A\in \llbracket T \rrbracket}  \bigtriangleup_{a \in A} x_a \in V.
    \end{equation}
\end{enumerate}
\end{definition}

As is implicit from the notation, we consider a metric to be a function $m_T\colon V^{\mathrm{BAS}_T}\rightarrow V$ that takes as input the vector $\vec{x}$ of BAS attribute value (e.g. BAS costs), and outputs the AT's security value (e.g. minimal cost needed to succesfully attack the AT). This viewpoint is useful when extending AT metrics to the fuzzy setting in the next section.

\section{Fuzzy metrics for attack trees}\label{sec:fuzzy_metrics}

To define fuzzy AT metrics --- as stated, to the best of our knowledge no such definition exist yet ---  we equip each BAS with a fuzzy element of $V$, i.e., an element of $\mathbf{F}(V)$. Thus, a fuzzy attribution is an element $\vec{\mathsf{x}}$ of $\mathbf{F}(V)^{\mathrm{BAS}_T}$, assigning a fuzzy element $\mathsf{x}_a$ to each BAS $a$. For crisp metrics, the AT's metric value is obtained by applying a function $m_T$ to the crisp attribution vector $\vec{x}$, as outlined in Definition \ref{def:metric}. Analogously, we obtain the fuzzy metric value by applying $\tilde{m}_T$ to $\vec{\mathsf{x}}$, where $\tilde{m}_T$ is the Zadeh extension of $m_T$.

\begin{example}\label{ex:fuzzy_metric_calculate_fastAT}
    Consider the AT $T = \mathrm{AND}(r, \mathrm{OR}(q,p)) $ from Fig.~\ref{fig:ex_SAT}; recall that $ \llbracket T \rrbracket =  \{ \{r,q\}, \{r,p\}\}$. We consider the \emph{minimal time} metric, corresponding to the semiring $(\mathbb{R}_{\geq 0},\min, +)$. For this semiring, consider the fuzzy attribution $\vec{\mathsf{x}} = (\mathsf{x}_r, \mathsf{x}_q, \mathsf{x}_p)$ given by $\mathsf{x}_r = \{50 \mapsto 1, 60 \mapsto 1\}, \mathsf{x}_q = \{0\mapsto 1\}$, and $\mathsf{x}_p = \{5 \mapsto 1\}$, respectively; that is, $q$ and $p$ have crisp time values, and $r$ either takes time $50$ or $60$, with equal possibility.
    
    Since the minimal attacks are $\{r,q\}$ and $\{r,p\}$, the function $m_T\colon V^3 \rightarrow V$ is given by $m_T(x_r,x_q,x_p) = \min(x_r+x_q,x_r+x_p)$ for all $x_r,x_q,x_p \in V$. Then the fuzzy metric value is equal to $\tilde{m}_T(\mathsf{x}_r,\mathsf{x}_q,\mathsf{x}_p)$. Using the definition of Zadeh extension from Definition \ref{def:extension_principle}, the confidence that this fuzzy metric value is equal to a $y \in \mathbb{R}_{\geq 0}$ is equal to
    
    \begin{align*}
        \widetilde{m}_T(\vec{\mathsf{x}})[y] 
                 &=  \sup_{\substack{x_r, x_q, x_p \in \mathbb{R}_{\geq 0}:\\ \min(x_r + x_q, x_r + x_p) = y}} \min \bigl(\mathsf{x}_r[x_r], \mathsf{x}_q[x_q], \mathsf{x}_p[x_p] \bigr).
\end{align*}

Since $\mathsf{x}_q[x_q] \neq 0$ only for $x_q = 0$, where $\mathsf{x}_q[x_q] = 1$, we only need to consider $x_q = 0$, and, for the same reason, we only need to consider $x_p = 5$. Thus the expression above is equal to

\begin{align*}                 
\sup_{\substack{x_r:\\ \min(x_r, x_r + 5) = y}}\min \bigl(\mathsf{x}_r[x_r], 1, 1 \bigr)
                 &=\begin{cases}
                        1, & \textrm{ if $y=50$ or $y=60$},\\
                        0, & \textrm{ otherwise}.
                    \end{cases}
    \end{align*}
so $\widetilde{m}_T(\vec{\mathsf{x}}) = \{50 \mapsto 1, 60 \mapsto 1\}$.

\end{example}

Formally fuzzy AT metrics are then defined as follows.

\begin{definition} \label{def:fuzzy_metric}
Let $T$ be an AT and let $(V,\triangledown,\vartriangle)$ be a semiring.
\begin{enumerate}
    \item A \emph{fuzzy attribution} is an element $\vec{\mathsf{x}}$ of $\mathbf{F}(V)^{\mathrm{BAS}_T}$.

\item Given a fuzzy attribution $\vec{\mathsf{x}}$, the \emph{fuzzy metric value} of $T$ given $V$ and $\vec{\mathsf{x}}$ is defined as $\widetilde{m}_T(\vec{\mathsf{x}})$, where $\widetilde{m}_T\colon \mathbf{F}(V)^{\mathrm{BAS}_T} \rightarrow \mathbf{F}(V)$ is the Zadeh extension of the function $m_T$ from Definition \ref{def:metric}.
    
\end{enumerate}
\end{definition}

More concretely, $\widetilde{m}_T(\vec{\mathsf{x}})$ is the fuzzy element of $V$ defined, for $y \in V$, by
\begin{align}\label{eq:fuzzy_metrics}
\widetilde{m}_T(\vec{\mathsf{x}})[y] &= \sup_{\substack{\vec{x} \in V^{\mathrm{BAS}_T}\colon\\ m_T(\vec{x}) = y}} \min_{v \in \mathrm{BAS}_T} \mathsf{x}_v[x_v]  \nonumber \\
 &= \sup_{\substack{\vec{x} \in V^{\mathrm{BAS}_T}\colon\\ \bigtriangledown_{A\in \llbracket T \rrbracket}  \bigtriangleup_{a \in A} x_a = y}} \min_{v \in \mathrm{BAS}_T} \mathsf{x}_v[x_v]. 
\end{align}

Our choice of using Zadeh's extension to extend crisp AT metrics to fuzzy AT metrics is justified by the fact that Zadeh extension treats the input fuzzy numbers $\mathsf{x}_1,\ldots,\mathsf{x}_n$ as \emph{independent}, i.e., it assumes that there is no nontrivial joint fuzzy distribution on the product space $\prod_i X_i$ of which the $\mathsf{x}_i$ are the marginal distributions \cite{reche2020construction}. This is a standard assumption on BASes (See \cite{pandey2005fault} for a similar viewpoint on fault trees) which we follow. In theory, one could extend the definition to allow non-independent BASes with more complicated joint fuzzy distributions. However, the prevailing viewpoint is that such relations should be explicitly modeled into the AT itself. For example, if the non-independence is due to a common cause affecting the joint distribution of multiple BAS attribute values, then this common cause should be explicitly modeled into the AT framework by replacing the BAS by sub-ATs with shared nodes \cite{pandey2005fault}. We will follow this philosophy and use the Zadeh extension as the natural way to define fuzzy AT metrics.

An alternative way of defining fuzzy AT metrics would be to replace the crisp operators $\triangledown, \vartriangle$ in \eqref{eq:crisp_metric} with their fuzzy counterparts $\widetilde{\triangledown}, \widetilde{\vartriangle}$. However, this does not coincide with our definition, as the following result shows:

\begin{theorem} \label{thm:altmetric}
In general,
\begin{equation} \label{eq:ineq}
\widetilde{m}_T(\vec{\mathsf{x}})  \neq 
\underset{A\in \llbracket T \rrbracket}{\widetilde{\bigtriangledown}}  \underset{a\in A}{\widetilde{\bigtriangleup}} \mathsf{x}_a,
\end{equation}
\end{theorem}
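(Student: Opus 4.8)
The plan is to prove \eqref{eq:ineq} by exhibiting a single counterexample; since the claim is only that equality fails \emph{in general}, one AT together with one semiring and one fuzzy attribution suffices. I would reuse the setup of Example~\ref{ex:fuzzy_metric_calculate_fastAT}: the tree $T = \mathrm{AND}(r,\mathrm{OR}(q,p))$ with $\llbracket T\rrbracket = \{\{r,q\},\{r,p\}\}$, the minimal-time semiring $(\mathbb{R}_{\geq 0},\min,+)$, and the fuzzy attribution $\mathsf{x}_r = \{50\mapsto 1,\ 60\mapsto 1\}$, $\mathsf{x}_q = \{0\mapsto 1\}$, $\mathsf{x}_p = \{5\mapsto 1\}$. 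The conceptual reason the two sides differ is that on the right-hand side the BAS $r$ contributes a separate, \emph{independent} copy of $\mathsf{x}_r$ to each of the two minimal attacks; the Zadeh extensions $\widetilde{\triangledown},\widetilde{\vartriangle}$ therefore admit the scenario in which $r$ ``takes time'' $50$ inside the attack $\{r,q\}$ while simultaneously taking time $60$ inside the attack $\{r,p\}$, which is spurious. The left-hand side $\widetilde m_T$, being the Zadeh extension of the \emph{single} function $m_T(x_r,x_q,x_p) = \min(x_r+x_q,\,x_r+x_p)$, forces one common value of $x_r$ across both attacks.

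For the computation, the left-hand side was already determined in Example~\ref{ex:fuzzy_metric_calculate_fastAT}: $\widetilde m_T(\vec{\mathsf{x}}) = \{50\mapsto 1,\ 60\mapsto 1\}$. For the right-hand side I would first evaluate the two inner $\widetilde{\vartriangle}$-terms using the addition rule for fuzzy numbers, obtaining $\mathsf{x}_r\,\widetilde{+}\,\mathsf{x}_q = \{50\mapsto 1,\ 60\mapsto 1\}$ and $\mathsf{x}_r\,\widetilde{+}\,\mathsf{x}_p = \{55\mapsto 1,\ 65\mapsto 1\}$, and then apply the Zadeh extension of $\min$ (i.e.\ $\widetilde{\triangledown}$) to these two via Definition~\ref{def:extension_principle}: for each $z$, one has $\bigl((\mathsf{x}_r\,\widetilde{+}\,\mathsf{x}_q)\,\widetilde{\min}\,(\mathsf{x}_r\,\widetilde{+}\,\mathsf{x}_p)\bigr)[z] = \sup_{\min(u,v)=z}\min(\cdot,\cdot)$ over the admissible pairs. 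Running through the four pairs $(u,v)$ with $u\in\{50,60\}$, $v\in\{55,65\}$ gives $\min(50,55)=50$, $\min(50,65)=50$, $\min(60,55)=55$, and $\min(60,65)=60$, all with membership $1$, so the right-hand side of \eqref{eq:ineq} equals $\{50\mapsto 1,\ 55\mapsto 1,\ 60\mapsto 1\}$.

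Comparing the two, the right-hand side assigns membership $1$ to the value $55$, whereas the left-hand side assigns it membership $0$; hence the two fuzzy elements are distinct, which proves \eqref{eq:ineq}. (In fact the right-hand side is here a strict pointwise over-approximation of $\widetilde m_T(\vec{\mathsf{x}})$.) There is no serious technical obstacle in this argument; the only delicate point is the \emph{choice} of attribution, and understanding why it works is the crux: the discrepancy surfaces precisely because a BAS ($r$) shared by several minimal attacks is given a genuinely non-crisp value, so that the independent copies appearing in the right-hand side can be combined inconsistently. Were that shared BAS crisp, the two sides would agree in this example. This same phenomenon — the naive per-gate fuzzy computation mishandling shared nodes — is exactly what obstructs lifting the DAG-shaped algorithm of \cite{lopuhaa2022efficient} to the fuzzy setting, and it is why the modular/bottom-up approach developed later is needed even for tree-shaped ATs.
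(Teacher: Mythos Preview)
Your proposal is correct and follows essentially the same approach as the paper: the paper proves Theorem~\ref{thm:altmetric} via Example~\ref{ex:ineq}, which uses the identical AT, semiring, and fuzzy attribution from Example~\ref{ex:fuzzy_metric_calculate_fastAT} and arrives at the same right-hand side $\{50\mapsto 1,\ 55\mapsto 1,\ 60\mapsto 1\}$ versus the left-hand side $\{50\mapsto 1,\ 60\mapsto 1\}$. The only cosmetic difference is that the paper computes the right-hand side via level-$1$ $\alpha$-cuts rather than by directly enumerating the four pairs as you do, and your closing remark slightly overstates matters: for tree-shaped ATs the bottom-up algorithm \emph{does} agree with $\widetilde m_T$ (Theorem~\ref{theorem:BU_SAT}); it is the DAG case where the shared-BAS issue genuinely obstructs computation.
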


This result is shown by the following example.

\begin{example} \label{ex:ineq}
We continue Example \ref{ex:fuzzy_metric_calculate_fastAT}, where $\widetilde{m}_T(\mathsf{x}_p,\mathsf{x}_q,\mathsf{x}_r) = \{50 \mapsto 1, 60 \mapsto 1\}$. On the other hand,
\begin{align*}
    \underset{A\in \llbracket T \rrbracket}{\widetilde{\bigtriangledown}}  \underset{v\in A}{\widetilde{\bigtriangleup}} \mathsf{x}_v &= \widetilde{\min}\bigl(\mathsf{x}_r \widetilde{+} \mathsf{x}_q, \mathsf{x}_r \widetilde{+} \mathsf{x}_p\bigr).
\end{align*}

One could calculate this fuzzy number in a manner analogous to Example \ref{ex:fuzzy_metric_calculate_fastAT}, but here we show another method that is often more convenient. For a fuzzy number $\mathsf{x} \in \mathbf{F}(\mathbb{R}_{\geq 0})$, define $\mathsf{x}^{(1)} = \{x \in \mathbb{R}_{\geq 0} \mid \mathsf{x}[x] = 1\}$; this is the level 1 $\alpha-$\emph{cut} of $\mathsf{x}$ \cite{Jezewski2017theory}. Then from Definition \ref{def:extension_principle} one can deduce that for $\mathsf{x},\mathsf{y} \in \mathbf{F}(\mathbb{R}_{\geq 0})$ and $f\colon \mathbb{R}_{\geq 0}^2 \rightarrow \mathbb{R}_{\geq 0}$ one has
\[
(\tilde{f}(\mathsf{x},\mathsf{y}))^{(1)} = \{f(x,y) \mid x \in \mathsf{x}^{(1)}, y \in \mathsf{y}^{(1)}\}.
\]

For brevity we abbreviate the right hand side of this equation to $f(\mathsf{x}^{(1)},\mathsf{y}^{(1)})$. It follows that

\begin{align*}
\left(\widetilde{\min}\bigl(\mathsf{x}_r \widetilde{+} \mathsf{x}_q, \mathsf{x}_r \widetilde{+} \mathsf{x}_p\bigr)\right)^{(1)} &= \min((\mathsf{x}_r \widetilde{+} \mathsf{x}_q)^{(1)}, (\mathsf{x}_r \widetilde{+} \mathsf{x}_p)^{(1)})\\
&= \min(\mathsf{x}_r^{(1)} + \mathsf{x}_q^{(1)}, \mathsf{x}_r^{(1)} + \mathsf{x}_p^{(1)})\\
&= \min(\{50,60\} + \{0\}, \{50,60\} + \{5\}) \\
&= \min(\{50,60\},\{55,65\})\\
& = \{50,55,60\}.
\end{align*}
Hence $\left(\underset{A\in \llbracket T \rrbracket}{\widetilde{\bigtriangledown}}  \underset{v\in A}{\widetilde{\bigtriangleup}} \mathsf{x}_v\right)[x] = 1$ if and only if $x \in \{50,55,60\}$. Since this fuzzy number only takes possibility values $0$ and $1$, it follows that 
\[
\underset{A\in \llbracket T \rrbracket}{\widetilde{\bigtriangledown}}  \underset{v\in A}{\widetilde{\bigtriangleup}} \mathsf{x}_v = \{50 \mapsto 1, 55 \mapsto 1, 60 \mapsto 1\} \neq \{50 \mapsto 1, 60 \mapsto 1\} = \widetilde{m}_T(\mathsf{x}_p,\mathsf{x}_q,\mathsf{x}_r).
\]
The `extra' possibility $55 \mapsto 1$ on the LHS comes from comparing the attack $\{r,q\}$ with cost $60+0$ to the attack $\{r,p\}$ with cost $50+5$. In other words, in this comparison $r$ is considered to have costs 50 and 60 simultaneously. By contrast, in the calculation of $\tilde{m}_T(\vec{\mathsf{x}})$ the cost $x_r$ can only have one value at a time.
\end{example}

Equation \eqref{eq:ineq} shows that a priori, there are two ways one can define fuzzy AT metrics. We choose to use the definition of $\widetilde{m}_T(\vec{\mathsf{x}})$ via Zadeh's extension as in Definition \ref{def:fuzzy_metric} for two reasons: first, this accurately captures the independence of the BASes as outlined below Definition \ref{def:fuzzy_metric}. Second, we show in Theorem \ref{theorem:modular_computation} that this definition satisfies modular decomposition, a fundamental property of AT metrics. The RHS of \eqref{eq:ineq} does \emph{not} satisfy modular decomposition, giving another argument why Definition \ref{def:fuzzy_metric} is the preferred definition (see Remark \ref{rem:nomod} below).

\begin{figure}[!t]
     \centering
     \begin{subfigure}[b]{0.35\textwidth}
         \centering
         \includegraphics[width=\textwidth]{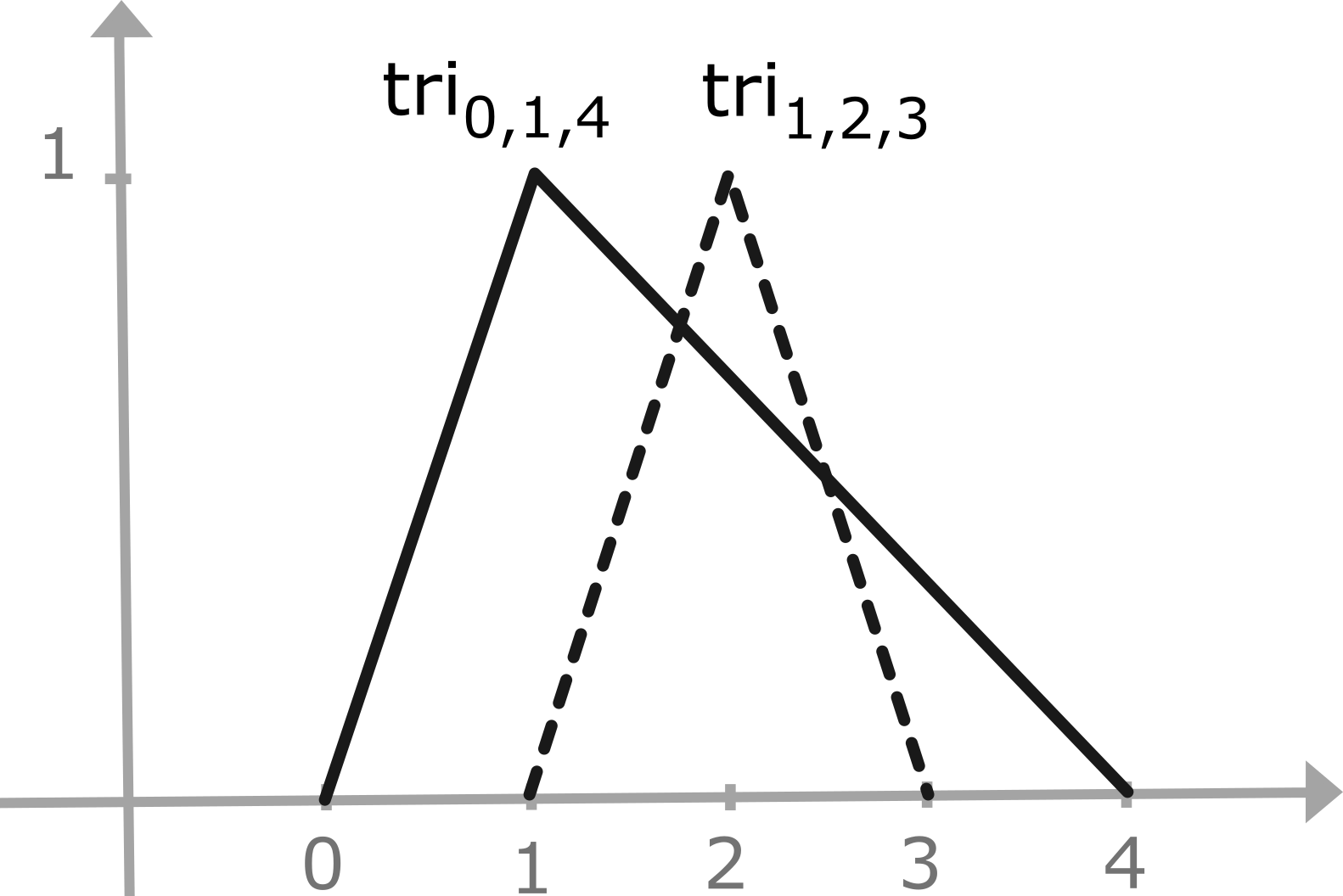}
         \caption{}
         \label{fig:2tri}
     \end{subfigure}
     \hspace{5em}
     \begin{subfigure}[b]{0.35\textwidth}
        \hfill
        \centering         
        \includegraphics[width=\textwidth]{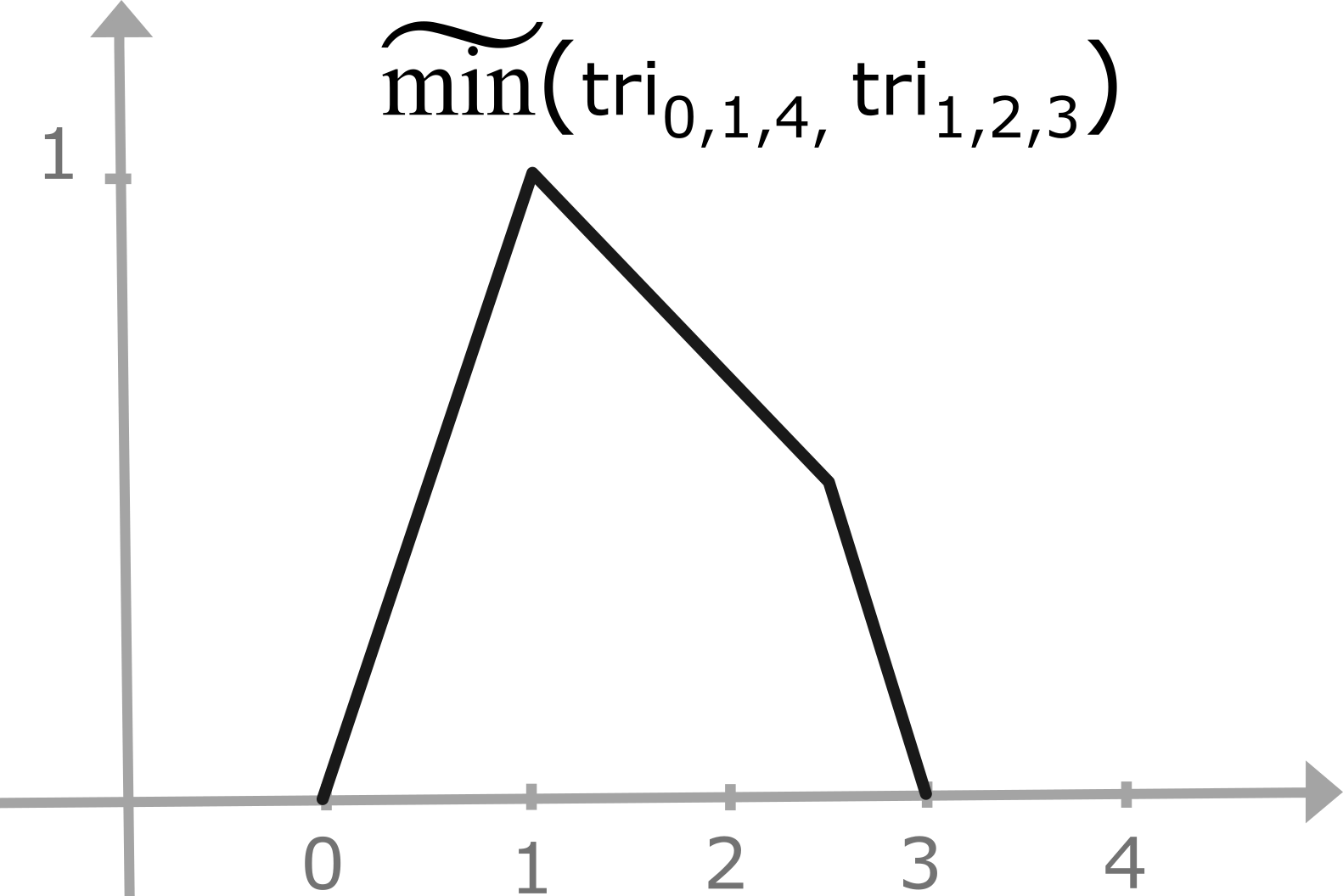}
        \caption{}
        \label{fig:min_2tri}
     \end{subfigure}
     \caption{Two triangular fuzzy numbers and their minimum, as a Zadeh extension of the function $\min$.}
     \label{fig:min}
\end{figure}

\begin{example}
Consider the AT $T = \mathrm{OR}(a,b)$ with the min cost metric, represented by the semiring $(\mathbb{R}_{\geq 0},\min,+)$. As fuzzy attributions consider $\mathsf{x}_a = \mathsf{tri}_{0,1,4}$ and $\mathsf{x}_b = \mathsf{tri}_{1,2,3}$. Then one can show (see Fig.~\ref{fig:min}) that $\widetilde{m}_T(\vec{\mathsf{x}}) = \widetilde{\min}(\mathsf{x}_a,\mathsf{x}_b)$ is given by
\[
\widetilde{\min}(\mathsf{x}_a,\mathsf{x}_b)[x] = \begin{cases}
x, & \textrm{ if $0 \leq x < 1$,}\\
1-\frac{x-1}{3}, & \textrm{ if $1 \leq x < 2.5$,}\\
3-x, & \textrm{ if $2.5 \leq x < 3$,}\\
0, & \textrm{ otherwise}.
\end{cases}
\]
In particular $\widetilde{\min}(\mathsf{x}_a,\mathsf{x}_b)$ is not a triangular fuzzy number. Hence triangular fuzzy numbers are not preserved by the operations inherent to AT analysis. The same holds for other popular subtypes of fuzzy numbers such as rectangular numbers; for this reason, we define fuzzy quantitative AT analysis for general fuzzy numbers in Definition \ref{def:fuzzy_metric}. Finding subtypes of fuzzy numbers that are preserved by AT analysis operations forms an interesting avenue for future research.
\end{example}

\begin{remark}
Besides AT metrics as defined in this paper, in \cite{lopuhaa2022efficient} quantitative analysis for so-called \emph{dynamic ATs} (DATs) is also defined. DATs include a new gate type SAND (``sequential AND'') used when attack steps have to be performed in sequential order; the normal AND-gate allows its children to be performed in parallel. This changes both semantics and quantitative analysis: an attack is now a partially ordered set $(A,\prec)$ rather than just a set $A$ of BASes, to denote the relative timing behaviour of the attack steps; and for quantitative analysis a third binary operation $\triangleright$ is introduced to correspond to SAND-gates, and the metric is defined in terms of these operators.

The results of this paper straightforwardly carry over to the DAT setting. That is, fuzzy DAT metrics are defined as the Zadeh extension of crisp DAT metrics akin to Definition \ref{def:fuzzy_metric}. Furthermore, this definition satisfies modular decomposition, which follows from the modular decomposition of crisp DAT metrics analogous to Theorem \ref{theorem:modular_computation}. As a result, a bottom-up algorithm analogous to Alg.~\ref{alg:BU} calculates fuzzy DAT metrics for treelike DATs.
\end{remark}

\section{Metric computation for ATs}\label{sec:computation_SAT}

To calculate the fuzzy AT metric $\tilde{m}_T(\mathsf{x})$ directly from Definition \ref{def:fuzzy_metric}, one first needs to calculate the function $m_T$, which in return requires one to find $\llbracket T \rrbracket$. In general, this set is of exponential size, making calculation cumbersome for large ATs. Therefore, dedicated algorithms for quantitative AT analysis are needed. For crisp AT metrics these are described in \cite{lopuhaa2022efficient}. In this section, we define a bottom-up algorithm for calculating fuzzy AT metrics for tree-shaped ATs, and we show that its validity follows from the fact that fuzzy AT metrics satisfy modular decomposition. We also show that the BDD-based approach for metric calculation for DAG-shaped ATs from \cite{lopuhaa2022efficient} does not extend to the fuzzy case, and that a radically new approach is needed.

\subsection{Bottom-up algorithm}
The bottom-up algorithm presented in Algorithm~\ref{alg:BU} is adapted from the bottom-up algorithm for crisp AT metrics first presented in~\cite{mauw2006foundations}. It takes as input an AT $T$, a node $v$ of $T$, a semiring $D = (V,\triangledown,\vartriangle)$, and a fuzzy attribution $\vec{\mathsf{x}}$, and outputs a fuzzy value $\widetilde{\mathtt{BU}}(T,v,D,\vec{\mathsf{x}}) \in \mathbf{F}(V)$ assigned to $v$; this value corresponds to the metric value associated to reaching $v$. If $t(v) = \tBAS$, this is simply $\mathsf{x}_v$. If $t(v) = \tOR$, then $\widetilde{\mathtt{BU}}(T,v,D,\vec{\mathsf{x}})$ is obtained by applying $\widetilde{\triangledown}$ to the values associated to the children of $v$; for $t(v) = \tAND$ we instead use $\widetilde{\vartriangle}$. The AT's fuzzy metric value is then given by $\widetilde{\mathtt{BU}}(T,R_T,D,\vec{\mathsf{x}})$.

\SetKwComment{Comment}{/* }{ */}
\begin{algorithm}
\caption{$\widetilde{\mathtt{BU}}$ for tree-structured AT $T$.}\label{alg:BU}
\KwIn{attack tree $T=(N,E,t),$\\
\hspace{1cm} node $v\in N,$\\
\hspace{1cm} semiring attribute domain $D=(V,\triangledown,\vartriangle)$, \\
\hspace{1cm} fuzzy attribution $\vec{\mathsf{x}} \in \mathbf{F}(V)^{\mathrm{BAS}_T}$.}
\KwOut{Fuzzy element $\widetilde{\mathtt{BU}}(T,v,D,\vec{x}) \in \mathbf{F}(V)$.}

\SetAlgoLined
\uIf{$t(v)= \mathtt{OR}$}{
\Return $ \underset{w\in ch(v)}{\widetilde{\bigtriangledown}} \widetilde{\mathtt{BU}}(T,w,D,\vec{\mathsf{x}})$
}
\uElseIf{$t(v)= \mathtt{AND}$}{
\Return $ \underset{w\in ch(v)}{\widetilde{\bigtriangleup}} \widetilde{\mathtt{BU}}(T,w,D,\vec{\mathsf{x}})$
}
\Else (\tcc*[f]{$t(v)=\mathtt{BAS}$}){
{\Return $\mathsf{x}_v$ }
}
\end{algorithm}

\begin{theorem}\label{theorem:BU_SAT}
Let $T$ be a static AT with tree structure, $D = (V,\triangledown,\vartriangle)$ a semiring, and $\vec{\mathsf{x}}$ a fuzzy attribution with values in $V$. Then $\widetilde{m}_T(\vec{\mathsf{x}}) = \widetilde{\mathtt{BU}}(T, R_T , D,\vec{\mathsf{x}})$.
\end{theorem}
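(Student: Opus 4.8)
The plan is to prove the stronger statement by structural induction on the tree, establishing for every node $v$ that $\widetilde{\mathtt{BU}}(T,v,D,\vec{\mathsf{x}}) = \widetilde{m_{T_v}}(\vec{\mathsf{x}}|_{\mathrm{BAS}_{T_v}})$, where $T_v$ is the subtree rooted at $v$; the claimed statement is then the case $v = R_T$. Because $T$ is tree-shaped, the subtrees rooted at distinct children of a node have disjoint BAS sets, which is the crucial feature that makes the induction go through. The base case $t(v) = \tBAS$ is immediate: both sides equal $\mathsf{x}_v$, since the unique minimal attack on a BAS $v$ is $\{v\}$ and $m_{T_v}$ is the identity, whose Zadeh extension is the identity on $\mathbf{F}(V)$.

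For the inductive step I would treat the $\tOR$ case (the $\tAND$ case is analogous with $\vartriangle$ in place of $\triangledown$). Let $v$ have children $w_1,\dots,w_k$ with pairwise disjoint BAS sets $B_i = \mathrm{BAS}_{T_{w_i}}$. First I would record the combinatorial fact that for an $\tOR$-node $\llbracket T_v\rrbracket = \bigcup_i \llbracket T_{w_i}\rrbracket$, so that $m_{T_v}(\vec{x}) = \bigtriangledown_{i=1}^k m_{T_{w_i}}(\vec{x}|_{B_i})$ as a function on $V^{\mathrm{BAS}_{T_v}}$ — i.e. $m_{T_v} = \triangledown \circ (m_{T_{w_1}} \times \cdots \times m_{T_{w_k}})$ after reindexing coordinates. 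By the induction hypothesis $\widetilde{\mathtt{BU}}(T,w_i,D,\vec{\mathsf{x}}) = \widetilde{m_{T_{w_i}}}(\vec{\mathsf{x}}|_{B_i})$, so by the algorithm's definition $\widetilde{\mathtt{BU}}(T,v,D,\vec{\mathsf{x}}) = \widetilde{\triangledown}\bigl(\widetilde{m_{T_{w_1}}}(\vec{\mathsf{x}}|_{B_1}),\dots,\widetilde{m_{T_{w_k}}}(\vec{\mathsf{x}}|_{B_k})\bigr)$. Thus it suffices to show
\[
\widetilde{\triangledown}\bigl(\widetilde{m_{T_{w_1}}}(\vec{\mathsf{x}}|_{B_1}),\dots,\widetilde{m_{T_{w_k}}}(\vec{\mathsf{x}}|_{B_k})\bigr) = \widetilde{m_{T_v}}(\vec{\mathsf{x}}|_{\mathrm{BAS}_{T_v}}),
\]
which I would deduce from two general lemmas about Zadeh extensions: (i) the Zadeh extension respects composition, $\widetilde{g\circ h} = \tilde g \circ \tilde h$ for suitable $g,h$; and (ii) the Zadeh extension of a product map is the ``product'' of the Zadeh extensions provided the input fuzzy elements are on disjoint coordinate blocks — concretely $\widetilde{h_1\times\cdots\times h_k}(\vec{\mathsf{x}}|_{B_1},\dots,\vec{\mathsf{x}}|_{B_k})$ evaluated via Definition~\ref{def:extension_principle} unfolds to the same nested $\sup$-$\min$ because each coordinate of $\vec x$ appears in exactly one block $B_i$. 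Combining (i) with $g = \triangledown$ and $h = m_{T_{w_1}}\times\cdots\times m_{T_{w_k}}$, then (ii), gives exactly the required identity.

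The main obstacle, and the step that needs care, is lemma (ii): the interchange of suprema. Writing out $\widetilde{m_{T_v}}(\vec{\mathsf{x}})[y]$ as $\sup\{\min_{v\in\mathrm{BAS}_{T_v}}\mathsf{x}_v[x_v] : \triangledown_i m_{T_{w_i}}(\vec x|_{B_i}) = y\}$ and comparing it to $\sup\{\min_i \widetilde{m_{T_{w_i}}}(\vec{\mathsf{x}}|_{B_i})[y_i] : \triangledown_i y_i = y\}$, one must justify that optimizing jointly over $\vec x \in V^{\mathrm{BAS}_{T_v}}$ equals optimizing the $y_i$ over $V$ and then, for each $i$ independently, optimizing $\vec x|_{B_i}$ over the fiber $m_{T_{w_i}}^{-1}(y_i)$; this factorization is valid precisely because the sets $B_i$ partition $\mathrm{BAS}_{T_v}$, so the objective $\min_{v}\mathsf{x}_v[x_v] = \min_i \min_{v\in B_i}\mathsf{x}_v[x_v]$ splits accordingly and the constraint decouples. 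One must also handle the degenerate cases where some fiber $m_{T_{w_i}}^{-1}(y_i)$ is empty (then that term is $0$ by convention, and a $\sup$ over an empty set likewise should be read as $0$), and confirm associativity/commutativity of $\triangledown$ (given, since $D$ is a semiring) so that the $k$-ary $\triangledown$ and its Zadeh extension are well defined. Once these bookkeeping points are settled, the induction closes and the theorem follows; I would state (i) and (ii) as auxiliary lemmas, since (ii) in fact underlies the modular decomposition Theorem~\ref{theorem:modular_computation} as well.
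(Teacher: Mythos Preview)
Your proposal is correct and is essentially the same approach as the paper's: the mathematical core in both is that Zadeh extension respects composition on disjoint coordinate blocks---your lemmas (i) and (ii) together are precisely the paper's auxiliary result (stated there as Theorem~\ref{theorem:modular_decomposition}) used to prove modular decomposition, Theorem~\ref{theorem:modular_computation}---and both arguments then run an induction over the tree. The paper packages the inductive step via the module abstraction (picking a non-root internal node $v$, splitting into $T_v$ and $T^v$, and inducting on the number of non-leaf nodes), whereas you do a direct structural induction on nodes and invoke the compositionality lemma inline; this is only an organizational difference, and as you yourself note, your lemma (ii) is exactly what underlies Theorem~\ref{theorem:modular_computation}.

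One small caveat worth making explicit: the $\tAND$ case is not purely ``analogous with $\vartriangle$ in place of $\triangledown$''. For an $\tAND$-node one has $\llbracket T_v\rrbracket=\{A_1\cup\cdots\cup A_k : A_i\in\llbracket T_{w_i}\rrbracket\}$, and obtaining the factorization $m_{T_v}(\vec x)=\bigtriangleup_{i} m_{T_{w_i}}(\vec x|_{B_i})$ from $\bigtriangledown_{(A_1,\ldots,A_k)}\bigtriangleup_i\bigtriangleup_{a\in A_i}x_a$ genuinely uses the distributivity of $\vartriangle$ over $\triangledown$, not just associativity/commutativity. The paper sidesteps this by citing the crisp modular decomposition from~\cite{lopuhaa2022efficient}; in your direct route you should record it explicitly.
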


\begin{example}
    We apply the algorithm to Example~\ref{ex:fuzzy_metric_calculate_fastAT}. Then the algorithm calculates the metric as follows
    \begin{align*}
        \widetilde{\mathtt{BU}}(T, R_T , D,\vec{\mathsf{x}}) &= \widetilde{\mathtt{BU}}(T,r,D,\vec{\mathsf{x}}) \ \widetilde{\vartriangle} \ \widetilde{\mathtt{BU}}(T,\min(q,p),D,\vec{\mathsf{x}}) \\
        &= \widetilde{\mathtt{BU}}(T,r,D,\vec{\mathsf{x}}) \ \widetilde{\vartriangle} \ \Bigl( \widetilde{\mathtt{BU}}(T,q,D,\vec{\mathsf{x}}) \ \widetilde{\triangledown} \ \widetilde{\mathtt{BU}}(T,p,D,\vec{\mathsf{x}}) \Big)\\
        &= \sup_{\substack{x_r, x_{q \triangledown p} \in \mathbb{R}_{\geq 0}:\\ x_r + x_{q \triangledown p} = y}} \min \Bigl(\mathsf{x}_r[x_r], \sup_{\substack{ x_q, x_p \in \mathbb{R}_{\geq 0}:\\ \min( x_q, x_p) = x_{q \triangledown p} }} \min \bigl( \mathsf{x}_q[x_q], \mathsf{x}_p[x_p] \bigr) \Bigr) \\
        &= \sup_{\substack{x_r, x_q, x_p \in \mathbb{R}_{\geq 0}:\\ x_r + \min( x_q, x_p) = y}} \min \Bigl(\mathsf{x}_r[x_r], \mathsf{x}_q[x_q], \mathsf{x}_p[x_p] \big) \\
        &= \sup_{\substack{ x_r \in \mathbb{R}_{\geq 0}:\\ x_r + \min( 0, 5) = y}} \min \bigl(\mathsf{x}_r[x_r], 1, 1 \big) \\
        &=\begin{cases}
                        1, & \textrm{ if $y=50$ or $y=60$},\\
                        0, & \textrm{ otherwise}.
                    \end{cases} \\
        &=  \{50 \mapsto 1, 60 \mapsto 1\}.
    \end{align*}
\end{example}

The algorithm is efficient as we can see that it is linear in $|E|$, making it vastly more efficient than first calculating $m_T$ and then Zadeh-extending it. The algorithm is generic as it is applicable to popular quantitative metrics in ATs such as cost, damage, skill, probability, etc. \cite{lopuhaa2022efficient}. We should note, however, that the linearity of the time complexity assumes that the fuzzy operations $\widetilde{\triangledown}$ and $\widetilde{\vartriangle}$ take constant time. 

While the algorithm applies only to tree-structured ATs, this covers a large portion of the ATs found in the literature \cite{mauw2006foundations}. As such, the algorithm can be used in many applications.

As we show in the appendix, the proof of Theorem \ref{theorem:BU_SAT} depends on a fundamental property of AT metrics called \emph{modular decomposition}. In the next section, we will explain this and show that fuzzy metrics satisfy this property.

\subsection{Modular decomposition}\label{sec:modular_decomposition}

Modular decomposition is a fundamental property of AT metrics as it facilitates the recursive solution of many problems, which typically improves performance.

For a node $v$ in an AT $T$, let $T_v$ be the AT consisting of all \emph{descendants} of $v$, i.e., the nodes $w$ for which there exists a path $v \rightarrow w$. This is a rooted DAG with root $v$. A \emph{module} is a node $v$ for which $T_v$ is only minimally connected to the rest of $T$:

\begin{definition}\label{def:module}
Let $v \in N \setminus \mathrm{BAS}$. We call node $v$ a \emph{module} if $v$ is the only node in $T_v$ with connections to $T\setminus T_v$.
\end{definition}

For instance, in Fig.~\ref{fig:ex_SAT}, the modules are ``enter the bank'' and ``get money''. Finding the modules of an AT aids in calculating metrics as follows. Given a module $v$, one can split up $T$ into two parts: the sub-AT $T_v$ with root $v$, and the `quotient' $T^v$ obtained by replacing the entire sub-AT $v$ with a single new node, which we will still call $v$ (see Fig.~\ref{fig:modular_metric}). Then one can calculate the metric for $T_v$ to find $\widetilde{m}_{T_v}(\vec{\mathsf{x}})$, and use this as a BAS attribute value for $v$ in $T^v$. One then calculates the metric value for $T^v$ with this new BAS value. In \cite[Thm.~9.2]{lopuhaa2022efficient} it is shown that for crisp metrics this results in the same metric value for $T$ as when one considers the entirety of $T$ at once. As a result, we can split up metric calculations via a divide-and-conquer approach once one has identified the modules. The following theorem shows that this also holds for fuzzy AT metrics.

\begin{theorem}\label{theorem:modular_computation}
Let $(V, \triangledown, \vartriangle)$ be a semiring. Let $v$ be a module in an AT $T$, $\vec{\mathsf{x}} \in \mathbf{F}(V)^{\mathrm{BAS}_T}$ be a fuzzy attribution for $T$. Let $\vec{\mathsf{x}}_v \in \mathbf{F}(V)^{\mathrm{BAS}_{T_v}}$ be the fuzzy attribution for $T_v$ obtained from restricting $\mathsf{x}$, i.e., $(\vec{\mathsf{x}}_v)_w = \mathsf{x}_w$ for all $w \in \mathrm{BAS}_{T_v}$. Let $T^v$ be the AT obtained by replacing $T_v$ in $T$ by a single BAS still called $v$. Let $\vec{\mathsf{x}}^v \in \mathbf{F}(V)^{\mathrm{BAS}_{T^v}}$ be a fuzzy attribution for $T^v$ given by

\begin{align*}
\mathsf{x}^v_{v'} &= 
\begin{cases}
    \mathsf{x}_{v'} , & v'\neq v,\\
    \widetilde{m}_{T_v}(\vec{\mathsf{x}}), & v'= v.
\end{cases}
\end{align*}
Then $\widetilde{m}_{T}(\vec{\mathsf{x}}) = \widetilde{m}_{T^v}(\vec{\mathsf{x}}^v)$.
\end{theorem}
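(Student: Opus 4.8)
The plan is to reduce the fuzzy statement to the crisp modular decomposition theorem \cite[Thm.~9.2]{lopuhaa2022efficient} by exhibiting the fuzzy metric $\widetilde{m}_T$ as a Zadeh extension and exploiting functoriality of the Zadeh extension under composition. The key observation is that both sides of the desired equality are Zadeh extensions of crisp functions, so it suffices to prove the corresponding identity at the crisp level and then argue that the extension operation respects it.

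\textbf{Step 1: Identify the relevant crisp functions.} Write $B = \mathrm{BAS}_T$, $B_v = \mathrm{BAS}_{T_v}$, and $B^v = \mathrm{BAS}_{T^v} = (B \setminus B_v) \cup \{v\}$, so that $B$ is the disjoint union of $B_v$ and $B^v \setminus \{v\}$. Consider the crisp maps $m_{T_v}\colon V^{B_v} \to V$, $m_{T^v}\colon V^{B^v} \to V$, and $m_T\colon V^B \to V$. Define the ``assembly'' map $\Phi\colon V^B \to V^{B^v}$ by $\Phi(\vec{x})_{v'} = x_{v'}$ for $v' \neq v$ and $\Phi(\vec{x})_v = m_{T_v}(\vec{x}|_{B_v})$. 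The crisp modular decomposition theorem of \cite{lopuhaa2022efficient} states precisely that $m_T = m_{T^v} \circ \Phi$ as functions $V^B \to V$.

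\textbf{Step 2: Pass to Zadeh extensions.} Take Zadeh extensions throughout. On the one hand $\widetilde{m}_T(\vec{\mathsf{x}})$ is, by Definition~\ref{def:fuzzy_metric}, the Zadeh extension of $m_T$, hence of $m_{T^v} \circ \Phi$. On the other hand $\widetilde{m}_{T^v}(\vec{\mathsf{x}}^v)$ is the Zadeh extension of $m_{T^v}$ evaluated at the fuzzy attribution $\vec{\mathsf{x}}^v$ whose $v$-component is $\widetilde{m}_{T_v}(\vec{\mathsf{x}}_v)$, itself the Zadeh extension of $m_{T_v}$. So the theorem amounts to the statement that the Zadeh extension of a composite $m_{T^v} \circ \Phi$, evaluated at $\vec{\mathsf{x}}$, equals the Zadeh extension of $m_{T^v}$ evaluated at the image of $\vec{\mathsf{x}}$ under the (componentwise) Zadeh extension of $\Phi$. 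Concretely one wants, for each $y \in V$,
\begin{align*}
\sup_{\substack{\vec{x} \in V^B\colon\\ m_{T^v}(\Phi(\vec{x})) = y}} \min_{w \in B} \mathsf{x}_w[x_w]
\;=\;
\sup_{\substack{\vec{z} \in V^{B^v}\colon\\ m_{T^v}(\vec{z}) = y}} \min_{v' \in B^v} \mathsf{x}^v_{v'}[z_{v'}].
\end{align*}
This is a purely formal fact about Zadeh extensions: the extension of $\Phi$ sends $\vec{\mathsf{x}}$ to a fuzzy element of $V^{B^v}$ whose $v$-marginal is $\widetilde{m}_{T_v}(\vec{\mathsf{x}}_v)$ and whose other marginals are the $\mathsf{x}_{v'}$, and because $B_v$ and $B^v \setminus \{v\}$ are disjoint index sets the $\min$ over $B$ splits cleanly as $\min\bigl(\min_{w \in B_v} \mathsf{x}_w[x_w],\ \min_{v' \in B^v \setminus \{v\}} \mathsf{x}_{v'}[x_{v'}]\bigr)$. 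Substituting $z_v = m_{T_v}(\vec{x}|_{B_v})$ and using that $\sup \min = \min \sup$ for independent blocks of variables reassembles the right-hand side, where the inner supremum over $\{\vec{x}|_{B_v} : m_{T_v}(\vec{x}|_{B_v}) = z_v\}$ is exactly $\widetilde{m}_{T_v}(\vec{\mathsf{x}}_v)[z_v] = \mathsf{x}^v_v[z_v]$.

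\textbf{Main obstacle.} The routine bookkeeping (splitting indices, renaming variables) is harmless; the genuine subtlety is the interchange of $\sup$ and $\min$ needed to turn the nested supremum over $\vec{x} \in V^B$ into an outer supremum over $\vec{z} \in V^{B^v}$ with an inner supremum over the fibre of $m_{T_v}$. One must check that for a variable $\vec{x}$ that decomposes into two independent blocks $\vec{x}|_{B_v}$ and $\vec{x}|_{B^v \setminus \{v\}}$, and a constraint that factors through $z_v = m_{T_v}(\vec{x}|_{B_v})$, one has
\begin{align*}
\sup_{\substack{\vec{x}\colon m_{T^v}(\Phi(\vec{x})) = y}} \min\Bigl(\min_{B_v} \mathsf{x}_w[x_w],\ \min_{B^v\setminus\{v\}} \mathsf{x}_{v'}[x_{v'}]\Bigr)
= \sup_{\substack{\vec{z}\colon m_{T^v}(\vec{z}) = y}} \min\Bigl(\sup_{\substack{\vec{x}|_{B_v}\colon m_{T_v}(\vec{x}|_{B_v}) = z_v}} \min_{B_v} \mathsf{x}_w[x_w],\ \min_{B^v\setminus\{v\}} \mathsf{x}_{v'}[z_{v'}]\Bigr).
\end{align*}
This holds because $\min$ is monotone and continuous from below in each argument, so pushing the supremum over the $B_v$-block inside the outer $\min$ is valid, and the $B^v \setminus \{v\}$-block plays no role in the constraint defining $z_v$; it is essentially the associativity/independence property of the Zadeh extension already used implicitly in the proof of Theorem~\ref{theorem:BU_SAT}. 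Once this interchange is justified, the equality is immediate from the crisp identity $m_T = m_{T^v}\circ\Phi$, and no properties of the semiring beyond those guaranteeing that crisp modular decomposition holds are needed.
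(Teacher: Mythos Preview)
Your proposal is correct and follows essentially the same route as the paper: reduce to the crisp modular decomposition $m_T = m_{T^v}\bigl(m_{T_v}(-),-\bigr)$ from \cite{lopuhaa2022efficient}, then invoke a compositionality lemma for the Zadeh extension (the paper isolates this as a separate statement, Theorem~\ref{theorem:modular_decomposition}, whose proof is exactly the $\sup$/$\min$ interchange you flag as the main obstacle). Your identification of that interchange as the only nontrivial step, and your justification via the identity $\sup_{a,b}\min(F(a),G(b))=\min(\sup_a F(a),\sup_b G(b))$ for independent blocks, matches the paper's argument.
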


The theorem is the extension of Theorem 9.2 of~\cite{lopuhaa2022efficient}. The proof of Theorem~\ref{theorem:modular_computation} is shown in the appendix. In a treelike AT, every node is a module, and applying modular decomposition then yields Theorem \ref{theorem:BU_SAT}.

\begin{remark} \label{rem:nomod}
In the same way that Theorem \ref{theorem:modular_computation} can be used to prove Theorem \ref{theorem:BU_SAT}, it can also be used to show that the alternative definition of fuzzy AT metrics in the RHS of \eqref{eq:ineq} does \emph{not} satisfy modular decomposition. Namely, if the alternative definition would satisfy modular decomposition, Alg.~\ref{alg:BU} would also calculate the alternative definition for treelike ATs. However, since this does not conform to our Definition \ref{def:fuzzy_metric} even for treelike ATs (see Theorem \ref{thm:altmetric}), we conclude that the alternative definition does not satisfy modular decomposition.
\end{remark}

\begin{figure}[t]
    
    \begin{center}
    \includegraphics[width=0.6\textwidth]{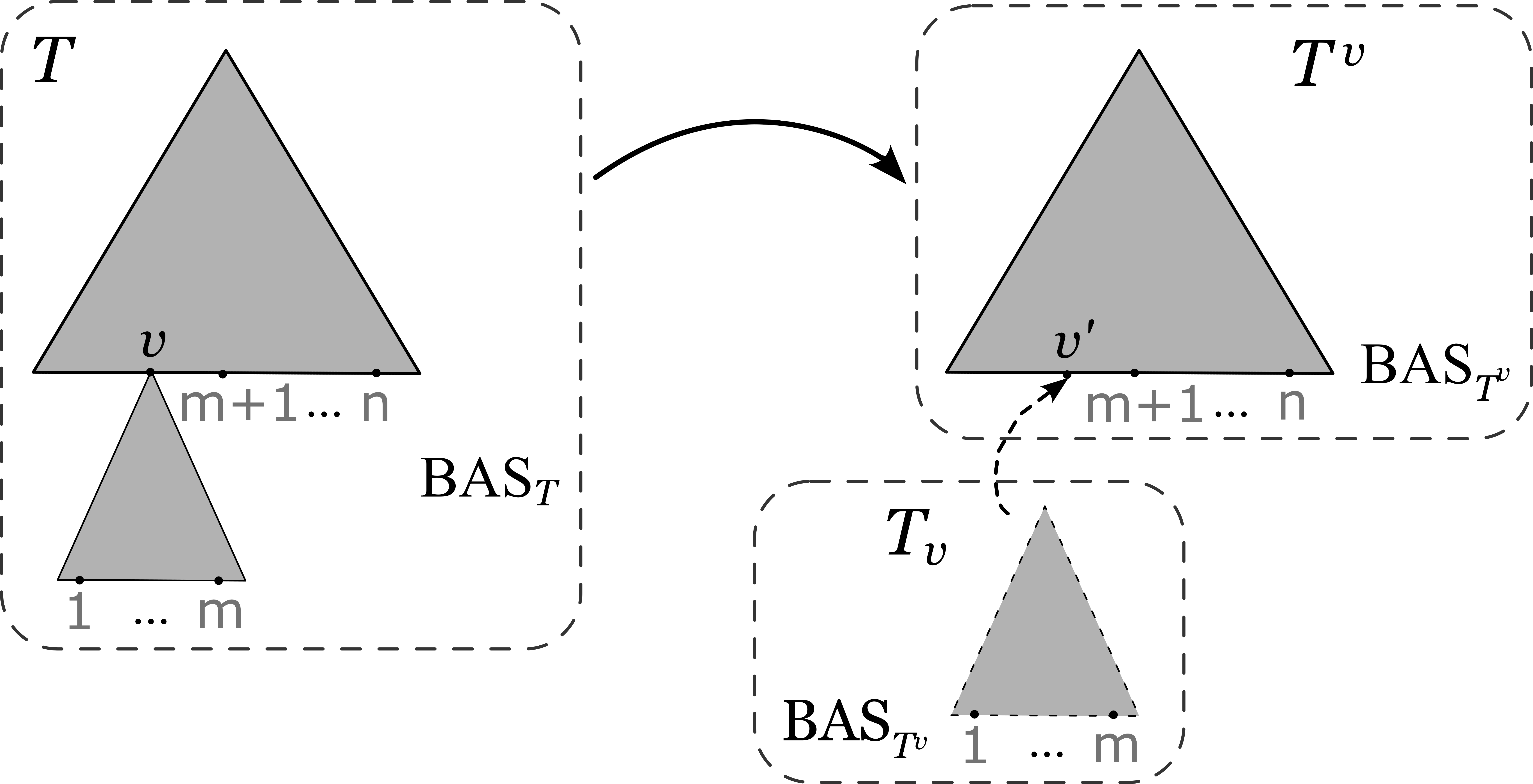}
    \end{center}
    \caption{Calculation of $\widetilde{m}_T(\vec{\mathsf{x}})$ can be done by computing $\widetilde{m}_{T^v}(\vec{\mathsf{x}}^v)$, where $v' \in \text{BAS}_{T^v}$ is assigned with fuzzy attribute $\widetilde{m}_{T_v}(\vec{\mathsf{x}}_v)$.}
    \label{fig:modular_metric}
\end{figure}

\subsection{Computations for DAG ATs}
Directed acyclic graph (DAG) ATs refer to ATs in which a node has more than one parent~\cite{lopuhaa2022efficient}. Fig.~\ref{fig:DAG} visualizes an AT with DAG structure. Unfortunately, Alg.~\ref{alg:BU}, does not correctly compute the (fuzzy) metric value of DAG-shaped ATs. The reason for this is that the algorithm does not detect whether a node's child is shared with another node or not, which leads to double counting of a child's metric value.
\begin{example}
 Let $\mathsf{x}_u=\{1 \mapsto 1\}, \mathsf{x}_v=\{0 \mapsto 1, 3 \mapsto 1\}, \mathsf{x}_w=\{1 \mapsto 1\},$ and $ D=\{\mathbb{N}, \min, + \}$. The min cost computation for the DAG AT shown in Fig.~\ref{fig:DAG} using algorithm~\ref{alg:BU} gives $\widetilde{\mathtt{BU}}(T, R_T , \mathsf{x}, D) = \widetilde{\min} (\mathsf{x}_u, \mathsf{x}_v) \ \widetilde{+} \ \widetilde{\min} (\mathsf{x}_v, \mathsf{x}_w) =  \{0\mapsto 1, 1 \mapsto 1\} \ \widetilde{+} \ \{0\mapsto 1, 1 \mapsto 1\} = \{0 \mapsto 1, 1 \mapsto 1, 2 \mapsto 1 \}$, whereas $\widetilde{m}_T(\mathsf{x}_u, \mathsf{x}_v, \mathsf{x}_w) = \{0 \mapsto 1, 2 \mapsto 1\}$. 
\end{example}

\begin{figure}[!t]
     \centering
     \begin{subfigure}[b]{0.2\textwidth}
         \centering
         \includegraphics[width=\textwidth]{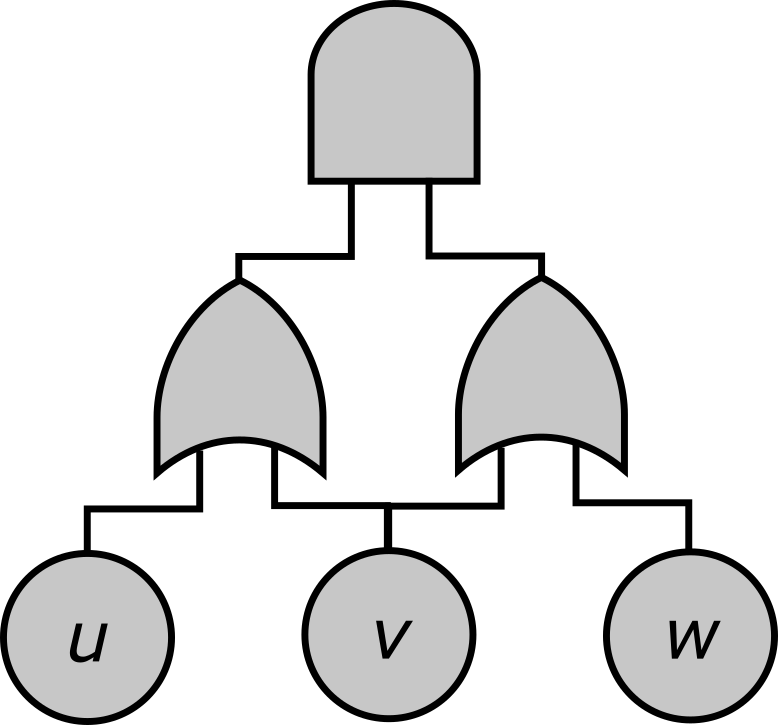}
         \caption{}
         \label{fig:DAG}
     \end{subfigure}
     \hspace{5em}
     \begin{subfigure}[b]{0.15\textwidth}
        \hfill
        \centering         
        \includegraphics[width=\textwidth]{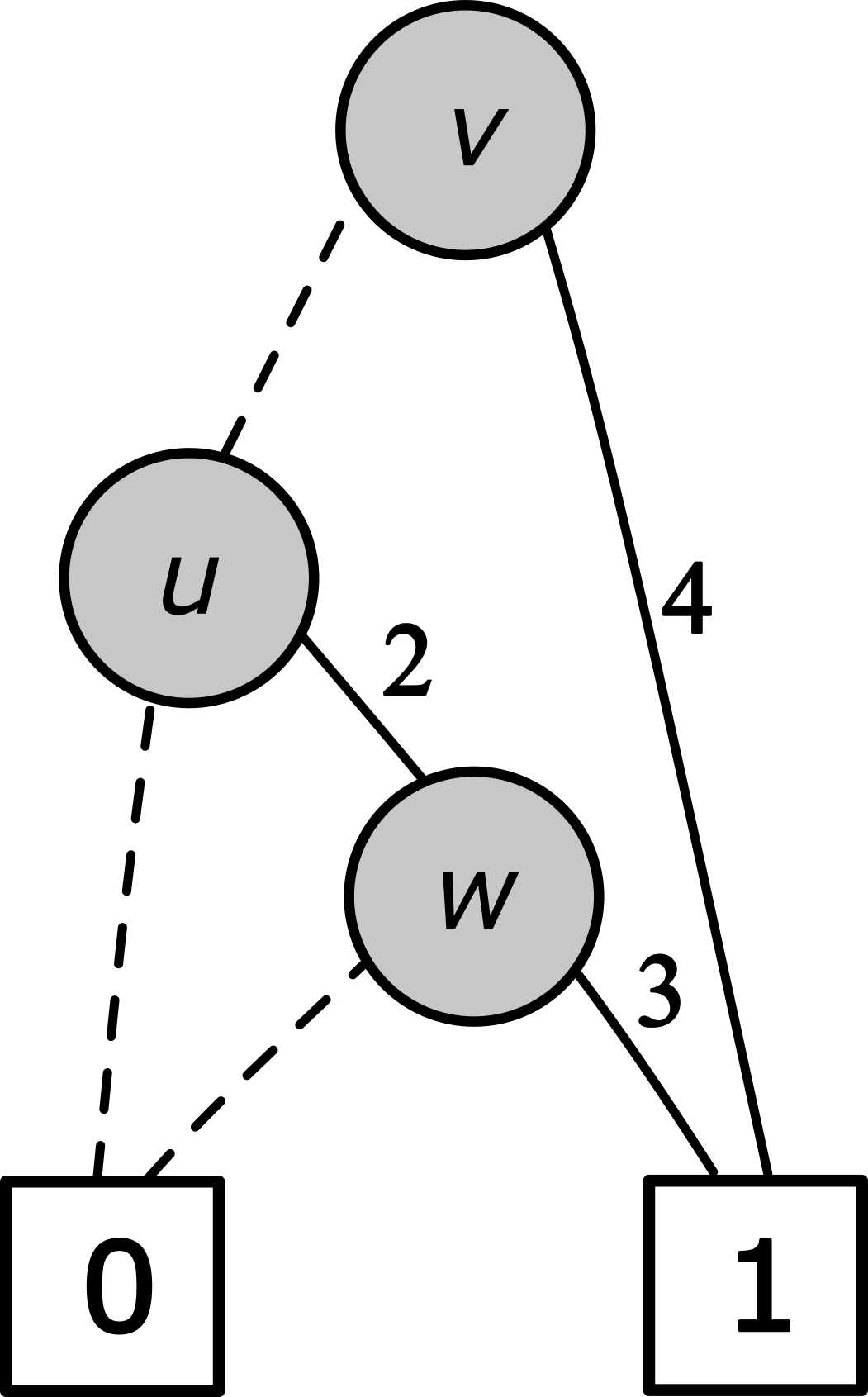}
        \caption{}
        \label{fig:BDD_DAG}
     \end{subfigure}
     \caption{A DAG AT (a), and its BDD (b).}
     \label{fig:ex_DAG}
\end{figure}

For crisp metrics, this was solved by the BDD-based approach introduced in~\cite{lopuhaa2022efficient}. Boolean functions are compactly represented by a binary decision diagram(BDD), a type of directed acyclic graph. One can apply this to the structure function of an AT as in Fig. \ref{fig:BDD_DAG}: as one can see, each nonleaf is labeled with a BAS and has two outgoing edges, while the leafs are labeled $\mathtt{0}$ and $\mathtt{1}$. For a given attack $A$, the BDD evaluates $f_T(R_T,A)$ as follows: at a node with label $v$, follow the dashed line if $v \notin A$, and the nondashed line if $v \in A$. The leaf in which one ends up holds the value of $f_T(R_T,A)$. Every Boolean function can be represented as a BDD, and although the corresponding BDD is worst-case of exponential size, BDDs are usually quite compact.

The BDD can also be used to calculate (crisp) AT metrics. We showcase this for the minimal cost metric, but it can be applied to other metrics, so long as the corresponding semiring is \emph{absorbing} (see \cite{lopuhaa2022efficient}). Minimal cost is calculated as follows: for each BAS $v$, the cost $x_v$ is attached to the nondashed edges originating from BDD nodes with label $v$, while each dashed edge gets label $0$ (see Fig. \ref{fig:BDD_DAG}). Then the attack with minimal cost corresponds to the shortest path from $R_T$ to $\mathtt{1}$ in the BDD; since the BDD is acyclic this computation is linear in the size of the BDD. In total, this means that this is worst-case exponential in the size of the AT, but in practice the calculation is quite fast.

Unfortunately, this approach no longer works for fuzzy AT metrics. The reason is that this approach assumes that the metric arises from a semiring, in particular, that distributivity holds. As the following example shows, if $(V,\triangledown,\vartriangle)$ is a semiring, then $(\mathbf{F}(V),\widetilde{\triangledown},\widetilde{\vartriangle})$ is no longer a semiring, because distributivity no longer holds. It is therefore no surprise that the BDD method no longer works either.

\begin{example}
Let $(V,\triangledown,\vartriangle) = (\mathbb{R}_{\geq 0}, \min,+)$, and consider the fuzzy elements $\mathsf{x} = \{0 \mapsto 1, 2 \mapsto 1\}$ and $\mathsf{y} = \mathsf{z} = \{0 \mapsto 1\}$. Then using the methods from Example \ref{ex:ineq}, we find that
\begin{align*}
\widetilde{\min}(\mathsf{x}\widetilde{+}\mathsf{y},\mathsf{x}\widetilde{+}\mathsf{z}) &= \widetilde{\min}(\{0 \mapsto 1, 2 \mapsto 1\},\{0 \mapsto 1, 2 \mapsto 1\}) \\
&= \{0 \mapsto 1, 1\mapsto 1, 2 \mapsto 1\},\\
\mathsf{x}\widetilde{+}\widetilde{\min}(\mathsf{y},\mathsf{z}) &= \{0 \mapsto 1, 2 \mapsto 1\} \widetilde{+} \{0 \mapsto 1\} \\
&= \{0 \mapsto 1, 2 \mapsto 1\}.
\end{align*}
Hence $(\mathbf{F}(\mathbb{R}_{\geq 0}),\widetilde{\min},\widetilde{+})$ is not distributive, and in particular not a semiring.
\end{example}

The reason that distributivity fails for fuzzy numbers is that, as we discussed in Section \ref{sec:fuzzy_metrics}, a Zadeh-extended operator like $\widetilde{+}$ acts as though its two arguments are independent. However, in an expression like $\widetilde{\min}(\mathsf{x}\widetilde{+}\mathsf{y},\mathsf{x}\widetilde{+}\mathsf{z})$ the arguments $\mathsf{x}\widetilde{+}\mathsf{y}$ and $\mathsf{x}\widetilde{+}\mathsf{z}$ are typically not independent. This ensures that distributivity is not retained under Zadeh extension.

Since the BDD method used for crisp AT metrics does not work, a new method is needed for calculating fuzzy metrics for DAG-like ATs. This is beyond the scope of this paper. One possible way to approach this problem is to find a way to keep track of the `double counting' that occurs when applying $\widetilde{\mathtt{BU}}$ to DAG-like ATs, and eliminate it at the end of the algorithm. Such an approach would require a radically new, strategy, and we therefore leave it to future work.

\section{Conclusion and future work}\label{sec:conclude}
In this paper we define a mathematical formulation for deriving AT fuzzy metrics values. In our knowledge, fuzzy theory has been applied in FTs for imprecise data, but fuzzy quantitative metrics remain somewhat implicitly defined. The definition we provide is explicit and generic for commonly used quantitative metrics. Moreover, this definition can be used to better capture uncertainty in quantitative metrics values. 
In addition, this paper introduces an efficient algorithm to calculate AT metrics with fuzzy attribution. The proposed algorithm is linear in $|E|$, as opposed to the definition of fuzzy metrics which requires calculation of crisp metrics followed by fuzzy operators. The algorithm works for tree-like structure models that satisfy modular decomposition. 

In the future, we want to develop an algorithm for fuzzy metrics computation on DAG ATs. For that aim, the algorithm should address the non-semiring property of fuzzy operators and the DAG structure on ATs. Another avenue for future research is the development of subtypes of fuzzy numbers that are preserved by (Zadeh-extended) arithmetic operations inherent to AT analysis, such as $\min$ and $\max$. Upon formally defining such subtypes, these can then be used to implement quantitative analysis algorithms efficiently.


\subsubsection*{Acknowledgement} This research has been partially funded   by ERC Consolidator grant 864075 CAESAR and the European Union’s Horizon 2020 research and innovation programme under the Marie Skłodowska-Curie grant agreement No. 101008233.

\subsubsection*{Disclosure of Interests} The authors have no competing interests to declare that are relevant to the content of this article.

\bibliographystyle{splncs04}
\bibliography{main}

\begin{thebibliography}{10}
\providecommand{\url}[1]{\texttt{#1}}
\providecommand{\urlprefix}{URL }
\providecommand{\doi}[1]{https://doi.org/#1}

\bibitem{isograph}
{Isograph}. \url{https://www.isograph.com/software/attacktree/}

\bibitem{risktree}
{Risk Tree}. \url{https://risktree.2t-security.co.uk}

\bibitem{SecurITree}
{Amenaza's SecurITree}. \url{https://www.amenaza.com/AT-tool.php}

\bibitem{deBarros2017afirst}
de~Barros, L.C., Bassanezi, R.C., Lodwick, W.A.: The Extension Principle of
  Zadeh and Fuzzy Numbers, pp. 23--41. Springer Berlin Heidelberg, Berlin,
  Heidelberg (2017). \doi{10.1007/978-3-662-53324-6_2}

\bibitem{Basiura2015advances}
Basiura, B., Duda, J., Gawe{\l}, B., Opi{\l}a, J., Pe{\l}ech-Pilichowski, T.,
  R{\k{e}}biasz, B., Skalna, I.: Fuzzy Numbers, pp. 1--26. Springer
  International Publishing, Cham (2015). \doi{10.1007/978-3-319-26494-3_1}

\bibitem{bowles1995application}
Bowles, J.B., Pelaez, C.E.: Application of fuzzy logic to reliability
  engineering. Proceedings of the IEEE  \textbf{83}(3),  435--449 (1995)

\bibitem{couso2019fuzzy}
Couso, I., Borgelt, C., Hullermeier, E., Kruse, R.: Fuzzy sets in data
  analysis: From statistical foundations to machine learning. IEEE
  Computational Intelligence Magazine  \textbf{14}(1),  31--44 (2019)

\bibitem{Czogala2000fuzzy}
Czogała, E., Leski, J.: Fuzzy and Neuro-Fuzzy Intelligent Systems, pp. 1--26.
  Physica Heidelberg (2012). \doi{10.1007/978-3-7908-1853-6}

\bibitem{Dubois1979FuzzyRA}
Dubois, D., Prade, H.: Fuzzy real algebra: Some results. Fuzzy Sets and Systems
   \textbf{2}(4),  327--348 (1979). \doi{10.1016/0165-0114(79)90005-8}

\bibitem{Garg2014anAT}
Garg, S., Aujla, G.S.: An attack tree based comprehensive framework for the
  risk and security assessment of vanet using the concepts of game theory and
  fuzzy logic. Journal of Emerging Technologies in Web Intelligence
  \textbf{6}(2),  247 – 252 (2014). \doi{10.4304/jetwi.6.2.247-252}

\bibitem{Hu2019FFTA}
Hu, G., Phan, H., Ouache, R., Gandhi, H., Hewage, K., Sadiq, R.: Fuzzy fault
  tree analysis of hydraulic fracturing flowback water storage failure. Journal
  of Natural Gas Science and Engineering  \textbf{72},  103039 (2019).
  \doi{10.1016/j.jngse.2019.103039}

\bibitem{Jezewski2017theory}
Jezewski, M., Czabanski, R., Leski, J.: Introduction to Fuzzy Sets, pp. 3--22.
  Springer International Publishing, Cham (2017).
  \doi{10.1007/978-3-319-59614-3_1}

\bibitem{kabir2017anoverview}
Kabir, S.: An overview of fault tree analysis and its application in model
  based dependability analysis. Expert Systems with Applications  \textbf{77},
  114--135 (2017). \doi{10.1016/j.eswa.2017.01.058}

\bibitem{kabir2018review}
Kabir, S., Papadopoulos, Y.: A review of applications of fuzzy sets to safety
  and reliability engineering. International Journal of Approximate Reasoning
  \textbf{100},  29--55 (2018). \doi{10.1016/j.ijar.2018.05.005}

\bibitem{kim1996multilevel}
Kim, C., Ju, Y., Gens, M.: Multilevel fault tree analysis using fuzzy numbers.
  Computers \& Operations Research  \textbf{23}(7),  695--703 (1996).
  \doi{10.1016/0305-0548(95)00070-4}

\bibitem{Komal2023FATA}
Komal: Chapter 4 - fuzzy attack tree analysis of security threat assessment in
  an internet security system using algebraic t-norm and t-conorm. In: Garg,
  H., Ram, M. (eds.) Engineering Reliability and Risk Assessment, pp. 53--64.
  Advances in Reliability Science, Elsevier (2023).
  \doi{10.1016/B978-0-323-91943-2.00003-4}

\bibitem{kumar2015quantitative}
Kumar, R., Ruijters, E., Stoelinga, M.: Quantitative attack tree analysis via
  priced timed automata. In: Sankaranarayanan, S., Vicario, E. (eds.) Formal
  Modeling and Analysis of Timed Systems. pp. 156--171. Springer International
  Publishing, Cham (2015)

\bibitem{Li2020vehicle}
Li, R., Li, F., Zhang, J.: Vehicle network security situation assessment method
  based on attack tree. In: IOP Conference Series: Earth and Environmental
  Science. vol.~428. Institute of Physics Publishing (2020).
  \doi{10.1088/1755-1315/428/1/012021}

\bibitem{liang1991fuzzyFT}
Liang, G.S., Wang, M.J.J.: Fuzzy fault-tree analysis using failure possibility.
  Microelectronics Reliability  \textbf{33}(4),  583--597 (1993).
  \doi{10.1016/0026-2714(93)90326-T}

\bibitem{lin1997hybridFTA}
Lin, C.T., Wang, M.J.J.: Hybrid fault tree analysis using fuzzy sets.
  Reliability Engineering \& System Safety  \textbf{58}(3),  205--213 (1997).
  \doi{10.1016/S0951-8320(97)00072-0}

\bibitem{lopuhaa2022efficient}
Lopuhaä-Zwakenberg, M., Budde, C.E., Stoelinga, M.: Efficient and generic
  algorithms for quantitative attack tree analysis. IEEE Transactions on
  Dependable and Secure Computing pp. 1--18 (2022).
  \doi{10.1109/TDSC.2022.3215752}

\bibitem{mahmood2013FFTA}
Mahmood, Y.A., Ahmadi, A., Verma, A.K., Srividya, A., Kumar, U.: Fuzzy fault
  tree analysis: a review of concept and application. International Journal of
  System Assurance Engineering and Management  \textbf{4},  19--32 (2013).
  \doi{10.1007/s13198-013-0145-x}

\bibitem{massanet2014new}
Massanet, S., Riera, J.V., Torrens, J., Herrera-Viedma, E.: A new linguistic
  computational model based on discrete fuzzy numbers for computing with words.
  Information Sciences  \textbf{258},  277--290 (2014)

\bibitem{mauw2006foundations}
Mauw, S., Oostdijk, M.: Foundations of attack trees. In: Information Security
  and Cryptology-ICISC 2005: 8th International Conference, Seoul, Korea,
  December 1-2, 2005, Revised Selected Papers 8. pp. 186--198. Springer (2006)

\bibitem{pandey2005fault}
Pandey, M.: Fault tree analysis. Lecture notes, University of Waterloo,
  Waterloo  (2005)

\bibitem{peng2008approach}
Peng, Z., Xiaodong, M., Zongrun, Y., Zhaoxiang, Y.: An approach of fault
  diagnosis for system based on fuzzy fault tree. In: Proceedings of the 2008
  International Conference on MultiMedia and Information Technology. p.
  697–700. MMIT '08, IEEE Computer Society, USA (2009).
  \doi{10.1109/MMIT.2008.142}

\bibitem{Purba2015fuzzyprob}
Purba, J.H., {Sony Tjahyani}, D., Ekariansyah, A.S., Tjahjono, H.: Fuzzy
  probability based fault tree analysis to propagate and quantify epistemic
  uncertainty. Annals of Nuclear Energy  \textbf{85},  1189--1199 (2015).
  \doi{10.1016/j.anucene.2015.08.002}

\bibitem{Purba2022FuzzyPA}
Purba, J.H., Tjahyani, D.T.S., Susila, I.P., Widodo, S., Ekariansyah, A.S.:
  Fuzzy probability and $\alpha$‐cut based‐fault tree analysis approach to
  evaluate the reliability and safety of complex engineering systems. Quality
  and Reliability Engineering International  \textbf{38},  2356 -- 2371 (2022).
  \doi{10.1002/qre.3080}

\bibitem{reche2020construction}
Reche, F., Morales, M., Salmer{\'o}n, A.: Construction of fuzzy measures over
  product spaces. Mathematics  \textbf{8}(9), ~1605 (2020)

\bibitem{ruijters2015FTA}
Ruijters, E., Stoelinga, M.: Fault tree analysis: A survey of the
  state-of-the-art in modeling, analysis and tools. Computer Science Review
  \textbf{15-16},  29--62 (2015). \doi{10.1016/j.cosrev.2015.03.001}

\bibitem{schneier1999modeling}
Schneier, B.: Modeling security threats. Dr. Dobb's journal  \textbf{24}(12)
  (1999)

\bibitem{singer1990fuzzysetFT}
Singer, D.: A fuzzy set approach to fault tree and reliability analysis. Fuzzy
  Sets and Systems  \textbf{34}(2),  145--155 (1990).
  \doi{10.1016/0165-0114(90)90154-X}

\bibitem{tanaka1983fault}
Tanaka, H., Fan, L.T., Lai, F.S., Toguchi, K.: Fault-tree analysis by fuzzy
  probability. IEEE Transactions on Reliability  \textbf{R-32}(5),  453--457
  (1983). \doi{10.1109/TR.1983.5221727}

\bibitem{wang2021cybersecurity}
Wang, S., Ding, L., Sui, H., Gu, Z.: Cybersecurity risk assessment method of
  {ICS} based on attack-defense tree model. J. Intell. Fuzzy Syst.
  \textbf{40}(6),  10475–10488 (jan 2021). \doi{10.3233/JIFS-201126}

\bibitem{wen2021risk}
Wen, B., Li, P.: Risk assessment of security and stability control system
  against cyber attacks. In: 2021 IEEE 2nd China International Youth Conference
  on Electrical Engineering (CIYCEE). pp.~1--5 (2021).
  \doi{10.1109/CIYCEE53554.2021.9676799}

\bibitem{yazdi2023FTAimprovements}
Yazdi, M., Mohammadpour, J., Li, H., Huang, H.Z., Zarei, E., Pirbalouti, R.G.,
  Adumene, S.: Fault tree analysis improvements: A bibliometric analysis and
  literature review. Quality and Reliability Engineering International
  \textbf{39}(5),  1639--1659 (2023). \doi{10.1002/qre.3271}

\bibitem{zadeh1965fuzzy}
Zadeh, L.: Fuzzy sets. Information and Control  \textbf{8}(3),  338--353
  (1965). \doi{10.1016/S0019-9958(65)90241-X}

\bibitem{zadeh1975conceptIII}
Zadeh, L.: The concept of a linguistic variable and its application to
  approximate reasoning-iii. Information Sciences  \textbf{9}(1),  43--80
  (1975). \doi{10.1016/0020-0255(75)90017-1}

\bibitem{Zadeh1975conceptI}
Zadeh, L.: The concept of a linguistic variable and its application to
  approximate reasoning—i. Information Sciences  \textbf{8}(3),  199--249
  (1975). \doi{10.1016/0020-0255(75)90036-5}

\bibitem{zadeh1975conceptII}
Zadeh, L.: The concept of a linguistic variable and its application to
  approximate reasoning—ii. Information Sciences  \textbf{8}(4),  301--357
  (1975). \doi{10.1016/0020-0255(75)90046-8}

\end{thebibliography}

\section*{Appendix}\label{sec:appendix}

\subsection*{Proof of Theorem~\ref{theorem:modular_computation}}

\begin{proof}
    We use Fig.~\ref{fig:modular_metric} for illustration. We enumerate the BASes of $T$ as $b_1,\ldots,b_n$, such that the first $b_1,\ldots,b_m$ are those BASes that are in $T_v$. Under this enumeration, we can write $\vec{\mathsf{x}} = (\mathsf{x}_1, \dots, \mathsf{x}_n)$, and the definition of $\vec{\mathsf{x}}^v$ then becomes $\vec{\mathsf{x}}^v = \bigl(\widetilde{m}_{T_v}(\mathsf{x}_1, \dots, \mathsf{x}_m), \mathsf{x}_{m+1},\dots, \mathsf{x}_n \bigr)$. To prove Theorem~\ref{theorem:modular_computation}, we need to prove that
    \begin{equation}\label{eq:decomposition}
        \widetilde{m}_T(\mathsf{x}_1,\ldots,\mathsf{x}_{n}) = \widetilde{m}_{T^v} \bigl( \widetilde{m}_{T^v}(\mathsf{x}_1,\ldots,\mathsf{x}_m),\mathsf{x}_{m+1},\ldots,\mathsf{x}_{n} \bigr).
    \end{equation}
    By Theorem 9.2 in~\cite{lopuhaa2022efficient} modular analysis works for crisp metrics, i.e., for all $\vec{x} \in V^n$ we have
    \begin{equation*}
        m_T(x_1,\ldots,x_{n}) = m_{T^v}(m_{T_v}(x_1,\ldots,x_m),x_{m+1},\ldots,x_{n}).
    \end{equation*}
    Applying Theorem~\ref{theorem:modular_decomposition} below with $f=m_{T^v}$, $g=m_{T_v}$, and $h=m_T$, we get~\eqref{eq:decomposition}. Thus, the theorem is proven.
\end{proof}


\begin{theorem}\label{theorem:modular_decomposition}
Let $m < n \in \mathbb{Z}_{\geq 1}$, and $X$ be a set. Let $f\colon X^{n-m+1} \rightarrow X$ and $g\colon X^m \rightarrow X$ be functions. Let $h\colon X^{n}\rightarrow X$ be defined by
\begin{equation*}
    h(x_1,\ldots,x_{n}) = f(g(x_1,\ldots,x_m),x_{m+1},\ldots,x_{n}) \quad \text{for all } x_i \in X. 
\end{equation*}

Then for all $\mathsf{x}_i \in \mathbf{F}(X)$ it holds that

\begin{equation*}
\tilde{h}(\mathsf{x}_1,\ldots,\mathsf{x}_{n}) = \tilde{f}(\tilde{g}(\mathsf{x}_1,\ldots,\mathsf{x}_m),\mathsf{x}_{m+1},\ldots,\mathsf{x}_{n}).
\end{equation*}

\end{theorem}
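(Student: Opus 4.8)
The plan is to prove the identity pointwise: fix $y \in X$ and show that both sides of the claimed equation assign the same membership degree to $y$. Unfolding $\tilde{h}(\mathsf{x}_1,\ldots,\mathsf{x}_n)[y]$ via Definition~\ref{def:extension_principle} gives a supremum of $\min_{i=1}^n \mathsf{x}_i[x_i]$ over all tuples $(x_1,\ldots,x_n)$ with $f(g(x_1,\ldots,x_m),x_{m+1},\ldots,x_n) = y$ (and $0$ if no such tuple exists). On the other side, writing $\mathsf{y} = \tilde{g}(\mathsf{x}_1,\ldots,\mathsf{x}_m)$, the quantity $\tilde{f}(\mathsf{y},\mathsf{x}_{m+1},\ldots,\mathsf{x}_n)[y]$ is a supremum over tuples $(z, x_{m+1},\ldots,x_n)$ with $f(z,x_{m+1},\ldots,x_n) = y$ of $\min\bigl(\mathsf{y}[z], \min_{i=m+1}^n \mathsf{x}_i[x_i]\bigr)$, where in turn $\mathsf{y}[z] = \sup_{g(x_1,\ldots,x_m) = z} \min_{i=1}^m \mathsf{x}_i[x_i]$.

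The heart of the argument is then an interchange-of-suprema (and a "$\min$ distributes over $\sup$") manipulation. First I would handle the domain/emptiness bookkeeping: $f^{-1}(y)$ restricted to tuples whose first coordinate is in the image of $g$ is nonempty if and only if $h^{-1}(y)$ is nonempty, so the "$0$" cases on both sides coincide; this lets me assume nonemptiness and work purely with suprema. Then, substituting the formula for $\mathsf{y}[z]$ into the RHS, I get a double supremum — an outer sup over $(z,x_{m+1},\ldots,x_n)$ with $f(z,\ldots)=y$ and an inner sup over $(x_1,\ldots,x_m)$ with $g(x_1,\ldots,x_m)=z$ — of $\min\bigl(\sup_{g(\vec{x}_{\leq m})=z}\min_{i\leq m}\mathsf{x}_i[x_i],\ \min_{i>m}\mathsf{x}_i[x_i]\bigr)$. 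Using that for a fixed value $c \in [0,1]$ one has $\min(c, \sup_j a_j) = \sup_j \min(c, a_j)$ (valid since $[0,1]$ is a complete lattice and $\min$ is continuous/distributes over arbitrary sups in each argument), I can pull the inner supremum outside and merge the two suprema into a single supremum over all $(x_1,\ldots,x_n)$ subject to the conjunction of constraints $g(x_1,\ldots,x_m) = z$ and $f(z, x_{m+1},\ldots,x_n) = y$ for some $z$ — which is exactly the constraint $h(x_1,\ldots,x_n) = y$, the variable $z$ being fully determined as $z = g(x_1,\ldots,x_m)$. The integrand becomes $\min\bigl(\min_{i\leq m}\mathsf{x}_i[x_i], \min_{i>m}\mathsf{x}_i[x_i]\bigr) = \min_{i=1}^n \mathsf{x}_i[x_i]$, matching the LHS.

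I expect the main obstacle to be the careful justification of the supremum interchange and the "$\min$ over $\sup$" step, since these are easy to state but need the right lemma about $[0,1]$ as a complete lattice (arbitrary suprema exist, and binary $\min$ commutes with them); I would state this as a small auxiliary fact and cite completeness of $[0,1]$. A secondary subtlety is making sure the re-indexing of the constraint set is a genuine bijection of index sets: every tuple $(x_1,\ldots,x_n)$ with $h(x_1,\ldots,x_n)=y$ corresponds to exactly one pair (tuple, $z$) with $z = g(x_1,\ldots,x_m)$ satisfying both constraints, and conversely, so no weight is gained or lost when passing between the two supremum ranges. Once these two points are nailed down, the rest is routine substitution. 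The emptiness case requires only the observation that $y \in h(X^n) \iff y \in f(g(X^m) \times X^{n-m})$, which is immediate from the definition of $h$.
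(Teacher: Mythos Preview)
Your proposal is correct and follows essentially the same route as the paper's proof: unfold both sides via Definition~\ref{def:extension_principle}, pull the inner supremum (over $g^{-1}(z)$) through the outer $\min$, and collapse the nested constraints $f(z,x_{m+1},\ldots,x_n)=y$, $g(x_1,\ldots,x_m)=z$ into the single constraint $h(x_1,\ldots,x_n)=y$. The paper's argument is a four-line chain of equalities that leaves the $\min$-over-$\sup$ interchange and the emptiness bookkeeping implicit, whereas you spell these out; but the underlying argument is identical.
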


\begin{proof}
    \begin{align*}
        &\Tilde{f}(\Tilde{g}(\mathsf{x}_1, \dots, \mathsf{x}_m),\mathsf{x}_{m+1}, \dots, \mathsf{x}_{n})[x] \\
        &= \underset{x = f(
        x_g, x_{m+1},\dots, x_{n})}{\sup}
  \min\limits_{}\ \biggl(  \Bigl(  \underset{ x_{g} = g(x_1, \dots, x_m)}{\sup}
  \min\limits_{}\ \bigl( \mathsf{x}_{1}{[x_1]},\dots, \mathsf{x}_m[x_m]  \bigr) \Bigr)  ,\mathsf{x}_{m+1}[x_{m+1}],\dots, \mathsf{x}_{n}[x_{n}]  \biggr) \\
        &= \sup_{\substack{x = f(x_g,x_{m+1},\ldots,x_{n}),\\
  x_g = g(x_1,\ldots,x_m)}} \min \Bigl( \mathsf{x}_1[x_1],\ldots,\mathsf{x}_{n}[x_{n}] \Bigr) \\
        &= \underset{x = h(x_1,\dots, x_{n})}{\sup}
  \min\limits_{}\ \Bigl( \mathsf{x}_{1}[x_1],\dots, \mathsf{x}_{n}[x_{n}]  \Bigr) \\
        &=\Tilde{h}(\mathsf{x}_1, \dots, \mathsf{x}_{n})[x].
    \end{align*}
\end{proof}

\subsection*{Proof of Theorem~\ref{theorem:BU_SAT}}
\begin{proof}
We prove this by induction on the number $n$ of non-leaf nodes of $T$. If $n = 0$, then $T$ consists of a single BAS $b$ and $\widetilde{m}_T(\mathsf{x}_b) = \mathsf{x}_b = \widetilde{\mathtt{BU}}(T, R_T, D, \mathsf{x}_b)$.

If $n = 1$, then either $T = \mathrm{AND}(b_1,\ldots,b_k)$ or $T = \mathrm{OR}(b_1,\ldots,b_k)$ for BASes $b_1,\ldots,b_k$. In the AND-case, on has $\llbracket T \rrbracket = \{\{b_1,\ldots,b_k\}\}$, so $m_T(\vec{x}) = \bigtriangleup_{i=1}^k x_{b_i}$; hence
\begin{align*}
\widetilde{\mathtt{BU}}(T, R_T, D, \vec{\mathsf{x}}) &= \underset{1 \leq i \leq k}{\widetilde{\bigtriangleup}} \widetilde{\mathtt{BU}}(T,b_i,D,\vec{\mathsf{x}}) \\
&= \underset{1 \leq i \leq k}{\widetilde{\bigtriangleup}} \mathsf{x}_{b_i} \\
&= \widetilde{m}_T(\vec{\mathsf{x}}).
\end{align*}
In the OR-case one has $\llbracket T \rrbracket = \{\{b_1\},\ldots,\{b_k\}\}$, so $m_T(\vec{x}) = \bigtriangledown_{i=1}^k x_{b_i}$. The rest of the proof is then analogous to the AND-case. Together, this covers the case $n = 1$.

Now assume the theorem has been proven for all $n' < n$, and let $T$ be a tree-structured AT with $n$ non-leaf nodes. Let $v$ be any non-leaf, non-root node of $T$. Since $T$ is tree-structured, $v$ is a module. By the induction hypothesis we have
\begin{equation} \label{eq:buproof1}
\widetilde{\mathtt{BU}}(T_v, v, D, \vec{\mathsf{x}}) = \widetilde{m}_{T_v}(\mathsf{x}).
\end{equation}
Furthermore, it is straightforward to show by induction that for all nodes $w$ in $T_v$ one has $\widetilde{\mathtt{BU}}(T_v, w, D, \vec{\mathsf{x}}_v) = \widetilde{\mathtt{BU}}(T, w, D, \vec{\mathsf{x}})$. In particular, we have, using~\eqref{eq:buproof1},
\begin{equation} \label{eq:buproof2}
\widetilde{\mathtt{BU}}(T, v, D, \vec{\mathsf{x}}) = \widetilde{\mathtt{BU}}(T_v, v, D, \vec{\mathsf{x}}_v) = \widetilde{m}_{T_v}(\mathsf{x}).
\end{equation}
We now apply the same reasoning on $T^v$, where for all nodes $w$ it holds that $\widetilde{\mathtt{BU}}(T^v, w, D, \vec{\mathsf{x}}^v) = \widetilde{\mathtt{BU}}(T, w, D, \vec{\mathsf{x}})$. Indeed, for BASes $w \neq v$ this is immediately true, for $w=v$ it is shown in~\eqref{eq:buproof2}, and for non-BASes this follows from a straightforward induction proof. As a result, we get
\begin{align}
\widetilde{\mathtt{BU}}(T, R_T, D, \vec{\mathsf{x}}) &= \widetilde{\mathtt{BU}}(T^v, R_{T^v}, D, \vec{\mathsf{x}}) \nonumber \\
&= \widetilde{m}_{T^v}(\vec{\mathsf{x}}^v) \label{eq:buproof3}\\
&= \widetilde{m}_T(\vec{\mathsf{x}}); \label{eq:buproof4}
\end{align}
Here \eqref{eq:buproof3} follows from the induction hypothesis applied to $T^v$, and~\eqref{eq:buproof4} follows from Theorem \ref{theorem:modular_computation}.
\end{proof}

\end{document}